\newcommand{\tsp}{{\operatorname{T-SP}}}
\newcommand{\dd}{\mathrm{d}}
\newcommand{\indicator}{\mathrm{I}}
\renewcommand{\mid}{\,|\,}
\newcommand{\R}{\mathbb{R}}
\newcommand{\E}{\mathds{E}}
\def\simind{\stackrel{\mbox{\scriptsize{ind}}}{\sim}}
\def\simiid{\stackrel{\mbox{\scriptsize{iid}}}{\sim}}
\newtheorem{theorem}{Theorem}[section]
\newtheorem{lemma}[theorem]{Lemma}
\newtheorem{definition}[theorem]{Definition}
\mathchardef\mhyphen="2D
\def\Gammad{\mathrm{Gamma}}
\def\W{\mathbb{W}}
\NewDocumentCommand{\evalat}{sO{\big}mm}{%
  \IfBooleanTF{#1}
   {\mleft. #3 \mright|_{#4}}
   {#3#2|_{#4}}%
}
\title{Transform-scaled process priors for trait allocations in Bayesian nonparametrics}
\author[1]{Mario Beraha}
\author[2]{Stefano Favaro}
\affil[1]{\normalsize{Department of Economics and Statistics, University of Torino}} 
\affil[2]{\normalsize{Department of Economics and Statistics, University of Torino and \newline Collegio Carlo Alberto}}
\begin{document}

\maketitle

\begin{abstract}
Completely random measures (CRMs) provide a broad class of priors, arguably, the most popular, for Bayesian nonparametric (BNP) analysis of trait allocations. As a peculiar property, CRM priors lead to predictive distributions that share the following common structure: for fixed prior's parameters, a new data point exhibits a Poisson (random) number of ``new'' traits, i.e., not appearing in the sample, which depends on the sampling information only through the sample size. While the Poisson posterior distribution is appealing for analytical tractability and ease of interpretation, its independence from the sampling information is a critical drawback, as it makes the posterior distribution of ``new" traits completely determined by the estimation of the unknown prior's parameters. In this paper, we introduce the class of transform-scaled process (T-SP) priors as a tool to enrich the posterior distribution of ``new" traits arising from CRM priors, while maintaining the same analytical tractability and ease of interpretation. In particular, we present a framework for posterior analysis of trait allocations under T-SP priors, showing that Stable T-SP priors, i.e., T-SP priors built from Stable CRMs, lead to predictive distributions such that, for fixed prior’s parameters, a new data point displays a negative-Binomial (random) number of ``new" traits, which depends on the sampling information through the number of distinct traits and the sample size. Then, by relying on a hierarchical version of T-SP priors, we extend our analysis to the more general setting of trait allocations with multiple groups of data or subpopulations. The empirical effectiveness of our methods is demonstrated through numerical experiments and applications to real data.

\end{abstract}

\textbf{Keywords}: Bayesian nonparametrics; completely random measure; feature allocation; hierarchical scaled process prior; posterior analysis; predictive distribution; scaled process prior; trait allocation.

\section{Introduction}\label{sec1}

Bayesian nonparametric (BNP) analysis of trait allocations deals with data belonging to more than one group, referred to as traits, and exhibiting nonnegative (integer) levels of association to each trait \citep{Tit(08),Zho(12),Zho(14),Bro(15),Hea(16),Jam(17),Bro(18),Cam(18),Hea(20)}. For example, single cell expression data contain multiple genes with their corresponding expression levels from ancestral populations, members of a social network have friends to which they send multiple messages, documents contain different topics with their corresponding words. Trait allocations generalize both species allocations or clustering, where traits are understood as species' labels and each data point belongs to a single label identifying a cluster, and feature allocations, where traits are understood as features and each data point exhibits only a binary membership to multiple features \citep{Bro(13),Bro(13a)}. In particular, traits may be viewed as the natural generalization of features through the inclusion of nonnegative levels of memberships. BNP analysis of trait allocations has proved to be relevant in diverse fields of practical interest, including single cell analysis and topic modeling \citep{Zho(14),Roy(15)}, matrix factorization \citep{Zho(12)}, visual object recognition \citep{Tit(08)}, image segmentation analysis \citep{Bro(15)}, and network analysis \citep{Aye(21)}. We refer to \citet{Bro(18)} for a comprehensive account on BNP analysis of trait allocations, including the development of de Finetti type (representation) theorems of practical interest in BNP analysis.

Completely random measures (CRMs) provide a broad class of priors for BNP analysis of both feature and trait allocations, the most popular being the Beta process for features \citep{Gri(05),Teh(09)} and the Gamma process for traits \citep{Tit(08),Zho(12)}. In such a context, \citet{Jam(17)} presented a framework for posterior analysis, showing that all CRM priors lead to a peculiar predictive distribution for a new data point $Z_{n+1}$, given $n\geq1$ observable data points $Z_{1:n}=(Z_{1},\ldots,Z_{n})$ and fixed prior's parameters: i) $Z_{n+1}$ displays a Poisson (random) number of ``new" features/traits, i.e. features/traits not in $Z_{1:n}$, which depends on $Z_{1:n}$ only through the sample size $n$; ii) $Z_{n+1}$ displays an ``old" feature/traits, i.e. features/traits in $Z_{1:n}$, with a probability that depends on $Z_{1:n}$ through the empirical distribution of its levels of association in $Z_{1:n}$, and $n$. Such a predictive structure is inherited by the Poisson process formulation of CRMs \citep{Kin(93)}, and it provides a limitation of CRM priors, in the sense of a lack of flexibility both with respect to the form of the posterior distribution of ``new" features/traits and the use of the information of $Z_{1:n}$ in such a distribution. While the Poisson posterior distribution is appealing for analytical tractability and ease of interpretation, its independence from $Z_{1:n}$ is a critical drawback of CRM priors, as it makes the posterior distribution of ``new" features/traits completely determined by the estimation of the unknown prior's parameters. 

\subsection{Our contributions}

In this paper, we present a BNP approach to trait allocations, which relies on a novel class of priors with a more flexible predictive structure than CRMs. This is inspired by the work of \citet{Cam(23)} on feature allocations, which shows how scaled process (SP) priors, first introduced in \citet{Jam(15)}, allow to enrich the predictive distribution of the Beta process. In particular, we introduce the class of transform-scaled process (T-SP) priors as transformations of SP priors to deal with the more general trait allocations, in such a way that SP priors correspond to the identity transform, and we develop a framework for their posterior analysis. As a special case, we consider transformations of SP priors built from the Stable CRM \citep{Kin(75),Jam(15)}, referred to as Stable T-SP (ST-SP) priors, showing that they provide a sensible trade-off between enriching the predictive structure of CRM priors, and maintaining its analytical tractability and ease of interpretation. Precisely, we show that ST-SP priors lead to a predictive distribution for $Z_{n+1}$, given $Z_{1:n}$ and fixed prior's parameter, such that: i) $Z_{n+1}$ displays a negative-Binomial (random) number of ``new" traits, which depends on $Z_{1:n}$ only through the number of distinct traits and the sample size $n$; ii) $Z_{n+1}$ displays an ``old" trait, i.e., appearing in $Z_{1:n}$, with a probability that depends on $Z_{1:n}$ through the empirical distribution of its levels of association in $Z_{1:n}$, and $n$. An extension of our analysis is presented for the more general setting of trait allocations with multiple groups of data or subpopulations \citep{Mas(18),Jam(21)}, introducing a hierarchical version of T-SP priors, and developing its posterior analysis.

This is the first work to provide a comprehensive framework for BNP analysis of feature and trait allocations, as well as their generalizations to multiple groups of data, by relying on a broad class of nonparametric priors that allow enriching the predictive distribution of CRM priors, while maintaining analytical tractability and ease of interpretation. The effectiveness of our methods is demonstrated through some empirical analyses on synthetic data and real data. First, in two simulated scenarios, we consider the classical problem of predicting the number of ``new'' traits in additional unobservable samples, comparing ST-SP priors against both CRM priors and the Stable-Beta SP prior of \cite{Cam(23)}, the latter only considering the presence or absence of the traits disregarding the corresponding levels of association. Moreover, we consider an application of ST-SP priors on a problem of text classification with more than six thousand documents and 14 thousand unique words, proposing a nonparametric ``naive Bayes'' classifier along the similar lines of \cite{Zho(16)}. In both cases, we show how T-SP priors result in a better predictive performance. In general, our analyses demonstrate: i) the practical usefulness of trait allocation models, that, compared to the simpler feature allocation models, may be better suited to capture the data generating process; ii) the critical role of T-SP priors in inducing a more flexible predictive structure than nonparametric priors available in the literature.

\subsection{Related works}

The lack of flexibility in the predictive structure of CRM priors was first discussed by \citet{Mas(22)} in the context of BNP inference for the unseen-feature problem, namely the estimation of the number of hitherto unseen features that would be observed if $m\geq1$ additional samples were collected. In particular, they showed that all CRM priors lead to a Poisson posterior distribution for the number of unseen features, with such a distribution depending on the observable sample only through the sample size. This motivated the work of \citet{Cam(23)}, where SP priors are applied to enrich the posterior distribution of the number of unseen features, improving posterior inferences with respect to both estimation and uncertainty quantification. A similar scenario occurs in BNP inference for the unseen-species problem under the Dirichlet process (DP) prior \citep{Fer(73)}, which led to the use of the Pitman-Yor process (PYP) prior \citep{Pit(95),Pit(97)} for enriching the posterior distribution of the number of unseen species, while maintaining the analytical tractability and ease of interpretation of the DP. See, e.g., \citet{Lij(07)} and \citet{Fav(09)} for details. The predictive structures of the DP and the PYP priors in species allocations somehow resemble that of CRM and ST-SP priors, respectively, in feature and trait allocations. To some extent, ST-SP priors may be viewed as the natural counterpart of the PYP prior in feature and trait allocations.

\subsection{Organization of the paper}

The paper is structured as follows. In Section \ref{sec2}, we introduce the class of T-SP priors and develop their posterior analysis, showing that the special case of ST-SP priors lead to a more flexible predictive structure than CRM priors, while maintaining analytical tractability and ease of interpretation. In section \ref{sec3}, we introduce the class of hierarchical T-SP priors for BNP analysis of trait allocations with multiple groups of data or subpopulations, and develop their posterior analysis. Section \ref{sec4} contains a numerical illustration of our BNP approach to trait allocations, whereas in Section \ref{sec5} we discuss our work and some directions for future research. Proofs and complementary results are deferred to the Supplementary Materials.

\section{T-SP priors for trait allocations}\label{sec2}

For a measurable space of traits $\mathbb{W}$, we assume $n\geq1$ observations to be modeled as a random sample $Z_{1:n}=(Z_{1},\ldots,Z_{n})$ from the stochastic process $Z(w)=\sum_{k\geq1}A_{k}\delta_{w_{k}}(w)$, as a process indexed by $w\in\mathbb{W}$, where $(w_{k})_{k\geq1}$ are traits in $\mathbb{W}$ and $(A_{k})_{k\geq1}$ are independent $\mathbb{N}_{0}$-valued random variables such that $A_{k}$ is distributed according to a distribution $G_{A}(\tau_{k})$, with $\tau_{k}\geq0$ being an unknown positive parameter, for $k\geq1$. The $\mathbb{N}_{0}$-valued observational process $Z$ is referred to as the trait process with score (distribution) $G_{A}$ and parameter (discrete measure) $\mu=\sum_{k\geq1}\tau_{k}\delta_{w_{k}}$, denoted as $\mathrm{TrP}(G_{A},\mu)$ \citep{Jam(17)}. BNP inference for trait allocations relies on the specification of a suitable prior distribution on $\mu$, leading to the model
\begin{align}\label{exch_mod}
Z_i\,|\,\mu & \quad\simiid\quad \mathrm{TrP}(G_{A},\mu) \qquad i=1,\ldots,n,\\[0.2cm]
\notag \mu& \quad\sim\quad\mathscr{M},
\end{align}
namely $\mu$ is a discrete random measure on $\mathds{W}$ whose law $\mathscr{M}$ takes on the interpretation of a nonparametric prior distribution for the unknown trait's composition of the population $Z$. By de Finetti's representation theorem, the random variables $Z_{i}$'s in \eqref{exch_mod} are exchangeable with directing (de Finetti) measure $\mathscr{M}$ \citep{Ald(85)}. See \cite{Cam(18)}, and references therein, for a detailed treatment on exchangeability in BNP trait allocation models.

Under the BNP model \eqref{exch_mod}, the marginal sampling process $(Z_{i})_{i\geq1}$ is referred to as the generalized Indian buffet process (IBP) \citep{Jam(17)}, with the IBP being the special case in which $Z$ is the $\{0,1\}$-valued Bernoulli process and $\mu$ is the Beta process \citep{Gri(11)}. As observed in \citet{Jam(17)}, for a given parameter $\mu$, a suitable model for the score $G_A$ can be represented as a spike-and-slab distribution, namely for $a\in\mathbb{N}_{0}$
\begin{displaymath}
    \text{Pr}[A_{i,k} = a; \tau_k] = G_A(a; \tau_k) = \pi_A(\tau_k) \tilde G_{A^\prime}(a;\tau_k) + (1 - \pi_A(\tau_k)) \delta_{\{0\}}(a),
\end{displaymath}
with $\tilde{G}_{A^\prime}$ being a distribution on $\mathbb{N}$, thus explicitly accounting for $A_{i,k} = 0$, i.e., observation $i$ not displaying the $k$-th trait. Equivalently $A_{i,k} = B_{i,k} A^\prime_{i,k}$, where $B_{i,k}$ is a Bernoulli random variable with parameter $\pi_A (\tau_k)$ and $A_{i,k}$ is distributed as $G_{A^\prime}(\tau_k)$, for $i,k\geq1$. The discreteness of $\mu$ entails that traits are shared by observations with positive probability. We denote by $k_n$ the number of distinct traits, labeled $w^*_1, \ldots, w^*_{k_n}$, each appearing with frequency $m_{l} = \sum_{1\leq i\leq n} I_{\mathbb{R}_{+}}(Z_i(w^*_l))$, for $l=1, \ldots, k_n$. Moreover, let $\mathcal{B}_l$ be the index set containing the observations displaying the $l$-th trait, i.e. $\mathcal{B}_l = \left\{ i \in \{1, \ldots, n\}\text{ : }Z_i(w^*_l) > 0  \right\}$, and let $\mathcal{A}_l = \left\{ A^\prime_{i, l}\text{ : } i \in \mathcal{B}_l \right\}$ be the set of values associated with the $l$-th trait, that is the integer values associated with the $l$-th trait, restricted to those observations who display it.

\subsection{A review of CRM priors}

CRM priors form a broad class of nonparametric priors for the parameter $\mu$ of the trait process $Z$ \citep{Jam(17),Bro(18)}. Consider a CRM $\mu$ on $\mathbb{W}$, which is a random element taking values on the space of bounded discrete measure on $(\mathbb{W},\mathcal{W})$, such that for $k\geq1$ and a collection of disjoint Borel sets $C_1, \ldots, C_k \in \mathcal{W}$ the random variables $\mu(C_{1}),\ldots,\mu(C_{k})$ are independent \citep{Kin(67)}. We consider CRMs of the form $\mu(\cdot) = \int_{\R_+} s  N(\dd s,\, \cdot) = \sum_{k \geq 1} \tau_k \delta_{W_k}(\cdot)$, where $ N = \sum_{k \geq 1} \delta_{(\tau_k, W_k)}$ is a Poisson random measure on $\R_+ \times \mathbb{W}$ with L\'evy intensity measure $\nu(\dd x, \dd w)$, which characterizes the distribution of $\mu$ in terms its random jumps $\tau_{k}$'s and random locations $W_{k}$'s \citep{Kin(67),Kin(93)}. We focus on homogeneous L\'evy intensity measures, namely measures of the form $\nu(\dd x, \, \dd w) = \theta \rho(x) \dd x \, B_0(\dd w)$ where $\theta > 0$ is a parameter, $B_0$ is a nonatomic probability measure on $\mathbb{W}$ and $\rho(x) \dd x$ is a measure on $\R_+$ such that $\int_{\R_+} \rho(\dd s) = +\infty$ and $\psi(u) := \int_{\R_+}(1 - e^{-us}) \rho(s) \dd s < +\infty$ for all $u>0$, which ensure that $0 < \mu (\mathbb{W}) < +\infty$ almost surely. We write $\mu \sim \mbox{CRM}(\theta, \rho, B_0)$. Under \eqref{exch_mod}, the law of $\mu$ provides provide a natural prior distribution for the parameter $\mu$ of the trait process $Z$. See \citet{Jam(17)} for a posterior analysis of CRM priors for trait allocations. In the next theorem, we recall the posterior distribution and the predictive distribution of CRM priors \citep[Theorem 3.1 and Proposition 3.2]{Jam(17)}.

\begin{theorem}\label{teo:post_crm}
Let $Z_{1:n}$ be a random sample under \eqref{exch_mod} with $\mu \sim \text{CRM}(\theta, \rho, B_0)$, such that $Z_{1:n}$ displays $k_{n}$ traits $\{W^*_1, \ldots, W^*_{k_{n}}\}$ with frequencies $(m_1, \ldots, m_{k_{n}})$  and associated index sets $\{\mathcal B_1, \ldots ,\mathcal B_{k_n}\}$. The posterior distribution of $\mu$ given $Z_{1:n}$ coincides with the distribution of
\[
\mu\,|\,Z_{1:n}\stackrel{\text{d}}{=}  \mu^\prime + \sum_{l=1}^{k_n} J^*_l \delta_{W^*_l},
\]
where $\mu^\prime$ is a CRM on $\mathbb{W}$ with L\'evy intensity measure $\theta \rho_n(s) \dd s B_0(\dd x)$ such that
$\rho_n(s) = \left(1 - \pi_A(s) \right)^n \rho(s)$, and the $J^{\ast}_{l}$'s are independent random jumps, also independent of $\mu^\prime $, with density function
\[
    f_{J^*_l}(s) \propto \left[ 1 - \pi_A(s) \right]^{n - m_l} \pi_A^{m_l}(s) \prod_{i \in \mathcal B_l} \tilde G_{A^\prime}(\dd a_{i, l}; s) \rho(s), \qquad s > 0.
\]
Furthermore, the predictive distribution of $Z_{n+1}$, given $Z_{1:n}$, coincides with the distribution of 
\begin{equation}\label{eq:pred_crm}
Z_{n+1}\,|\,Z_{1:n}\stackrel{\text{d}}{=}       Z^{\prime}_{n+1} + \sum_{l = 1}^{k_n} A_{n+1, l} \delta_{W^*_l},
\end{equation}
where $Z^{\prime}_{n+1}\,|\, \mu^\prime \stackrel{\text{d}}{=}\sum_{k\geq1}A^{\prime}_{n+1,k}\delta_{W^{\prime}_{k}}\sim\mathrm{TrP}(G_A,\mu^\prime)$ is independent of the  $A_{n+1, l}$'s, with the $A_{n+1, l}$'s being independent random variables such that $A_{n+1, l}\,|\, J^*_l \sim G_A(J^*_l)$ for any $l=1,\ldots,k_{n}$.
\end{theorem}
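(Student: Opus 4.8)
Since the statement recalls \citet[Theorem 3.1 and Proposition 3.2]{Jam(17)}, the plan is to reconstruct the argument through the Poisson random measure representation $\mu(\cdot)=\int_{\R_+} s\, N(\dd s,\cdot)$ together with Palm calculus. First I would write the joint law of $(Z_{1:n}, N)$: conditionally on $N=\sum_{k\geq1}\delta_{(\tau_k,W_k)}$, each $Z_i$ assigns to the atom at $W_k$ the value $A_{i,k}$ with $\mathrm{Pr}[A_{i,k}=a \mid \tau_k]=G_A(a;\tau_k)$ independently over $k$ and $i$, so the likelihood of $Z_{1:n}$ factorizes over the atoms of $N$ into per-atom terms. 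An atom $(\tau_k,W_k)$ \emph{not} displayed by any of the $n$ observations contributes the survival factor $(1-\pi_A(\tau_k))^n$, while an atom displayed by the observations in an index set $\mathcal B_l$ with scores $\{a_{i,l}: i\in\mathcal B_l\}$ contributes $\pi_A^{m_l}(\tau_k)(1-\pi_A(\tau_k))^{n-m_l}\prod_{i\in\mathcal B_l}\tilde G_{A^\prime}(a_{i,l};\tau_k)$, reflecting the spike-and-slab structure of $G_A$.

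Next I would compute the marginal law of $Z_{1:n}$ by integrating out $N$, splitting the atoms into the $k_n$ ``pinned'' at the observed locations $W^*_1,\dots,W^*_{k_n}$ and the remainder. For the remainder, the exponential (Campbell) formula for Poisson random measures turns the product of survival factors over unobserved atoms into $\exp\{-\theta\int_{\R_+}[1-(1-\pi_A(s))^n]\rho(s)\dd s\}$ and, more to the point, shows that conditionally on the data the unobserved atoms form a CRM whose L\'evy intensity is thinned by the survival probability; for the pinned part, the Palm formula (Mecke's equation) applied $k_n$ times replaces the sum over $k_n$-tuples of distinct atoms of $N$ by an integral against $\theta^{k_n}\prod_l\rho(s_l)\dd s_l\, B_0(\dd w_l)$ of the corresponding likelihood factors, with $B_0$ nonatomic so that $W^*_1,\dots,W^*_{k_n}$ are almost surely distinct. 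Reading off the disintegration with respect to $Z_{1:n}$ then identifies (i) the ``ordinary'' part of the posterior as a CRM with L\'evy intensity $\theta(1-\pi_A(s))^n\rho(s)\dd s\, B_0(\dd w)=\theta\rho_n(s)\dd s\, B_0(\dd w)$, and (ii) the fixed atoms $J^*_l$ at $W^*_l$, mutually independent and independent of $\mu^\prime$, with density proportional to $[1-\pi_A(s)]^{n-m_l}\pi_A^{m_l}(s)\prod_{i\in\mathcal B_l}\tilde G_{A^\prime}(\dd a_{i,l};s)\rho(s)$, namely the per-atom likelihood factor times $\rho(s)$, normalized. This is the posterior part of the statement.

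For the predictive distribution, I would draw $\mu$ from the posterior just derived and sample $Z_{n+1}\mid\mu,Z_{1:n}\sim\mathrm{TrP}(G_A,\mu)$, using additivity of the trait process over the atoms of $\mu$: the contribution of $\mu^\prime$ is an independent $\mathrm{TrP}(G_A,\mu^\prime)$ draw $Z^\prime_{n+1}$ supplying the ``new'' traits, while the contribution of the fixed atom $J^*_l$ is the single score $A_{n+1,l}$ with $A_{n+1,l}\mid J^*_l\sim G_A(J^*_l)$, independently across $l$; this yields \eqref{eq:pred_crm}.

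The main obstacle is the rigorous bookkeeping in the disintegration step: one must carefully separate, in the joint law of $(Z_{1:n},N)$, the finitely many atoms the sample ``touches'' from the infinitely many it does not, justify interchanging the infinite product of survival factors with the expectation via the exponential formula (here $\psi(u)<+\infty$ secures integrability), and verify that the iterated Palm/Mecke computation produces normalizable densities for the $J^*_l$'s — equivalently, that $\int_{\R_+}[1-\pi_A(s)]^{n-m_l}\pi_A^{m_l}(s)\prod_{i\in\mathcal B_l}\tilde G_{A^\prime}(a_{i,l};s)\rho(s)\dd s$ is finite and positive. An alternative route that sidesteps Palm calculus is to approximate $\mu$ by CRMs with finitely many atoms (truncating the L\'evy measure away from $0$), apply ordinary Bayes' theorem in the finite-dimensional model, and pass to the limit; I expect the measure-theoretic control of this limit to be of comparable difficulty to the Palm-calculus bookkeeping.
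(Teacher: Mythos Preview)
Your sketch is correct and follows the standard Palm-calculus route of \citet{Jam(17)}; note, however, that the paper does not actually prove this theorem but simply recalls it from \citet[Theorem~3.1 and Proposition~3.2]{Jam(17)}, so there is no in-paper proof to compare against. Your reconstruction is essentially the argument of that reference.
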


According to the predictive distribution \eqref{eq:pred_crm}, $Z_{n+1}$ displays ``new" traits $W_{k}^{\prime}$'s, i.e. traits not appearing in $Z_{1:n}$, and ``old" traits $W_{l}^{\ast}$'s, i.e. traits appeared in $Z_{1:n}$. Because of the assumption that $B_{0}$ is a non-atomic probability measure, the labels of the ``new" traits are different from the labels of the ``old" traits, with probability one. In particular, from \eqref{eq:pred_crm}, the conditional probability of observing ``new" traits, given $Z_{1:n}$, is determined by the law of $Z_{n+1} ^{\prime}$, i.e. 
\begin{displaymath}
   \Pr\left(Z_{n+1}(\mathbb W \setminus \{W^*_1, \ldots, W^*_{k_{n}}\}) > 0 \mid Z_{1:n} \right),
\end{displaymath}
which depends on $Z_{1:n}$ only through the sample size $n$, as it is clear from $\rho_n$. Still from \eqref{eq:pred_crm}, the conditional probability of observing an ``old" trait $W^{\ast}_{l}$, given $Z_{1:n}$, is determined by the law of $A_{n+1, l}$, i.e. 
\begin{displaymath}
    \Pr\left(Z_{n+1}(W^*_l) > 0 \mid Z_{1:n}\right) = \int \pi_A(s) f_{J^*_l}(s) \dd s, 
\end{displaymath}
which depends on  $Z_{1:n}$ through the sample size $n$, the empirical frequency $m_l$ of $W^{\ast}_{l}$, and the displayed scores $\mathcal A_l$. As a corollary of Theorem \ref{teo:post_crm}, the posterior distribution of the number of ``new" traits in $Z_{n+1}$, given $Z_{1:n}$ and fixed prior's parameters, is a Poisson distribution that depends on the information in $Z_{1:n}$ only through $n$. Such a posterior structure is peculiar to CRM priors, arising from the Poisson process formulation of CRMs \citep{Kin(93)}.

Although the class of CRM priors is broad, all CRM priors lead to the same Poisson predictive structure for the number of ``new" traits, which makes them not a flexible prior model with respect to the induced predictive distributions. While the Poisson distribution is appealing for making posterior inferences analytically tractable and of easy interpretability, its independence from $Z_{1:n}$ makes the BNP approach under CRM priors a questionable oversimplification, with the probability of generating ``new" traits being completely determined by the estimation of unknown prior’s parameters. Such a limitation of CRM priors has been first investigated in the work of \citet{Cam(23)}, where a solution has been proposed in the special case of feature allocations, by relying on the class of SP priors   \citep{Jam(15)}. In particular, \citet{Cam(23)} considered the Bernoulli process, i.e., $Z\sim\mathrm{TrP}(G_{A},\mu)$ with $G_{A}$ being the Bernoulli distribution, and showed that SP priors for $\mu$ lead to a richer predictive structure than CRM priors, possibly including the whole sampling information in terms of the number of distinct features and their corresponding frequencies. As an example, they focussed on a SP prior for which the probability of observing ``new" features depends on $Z_{1:n}$ only through the sample size $n$ and the number $k_{n}$ of observed features, whereas the probability of observing an ``old" feature $W^{\ast}_{l}$ depends on  $Z_{1:n}$ through the sample size $n$ and the empirical frequency $m_l$ of the feature $W^{\ast}_{l}$.

\subsection{T-SP priors: definition and posterior analysis}

T-SP priors generalize SP priors to deal with the general trait process $Z$, while maintaining their desirable predictive structure. For $\mu \sim \mbox{CRM}(\theta, \rho, B_0)$, we denote by $\Delta_1 > \Delta_2 > \cdots$ the decreasingly ordered random jumps $\tau_{k}$'s of $\mu$, and then define the discrete random measure
\begin{displaymath}
    \mu_{\Delta_1} = \sum_{k \geq 1} \frac{\Delta_{k+1}}{\Delta_1} \delta_{W_{k+1}},
\end{displaymath}
with $\Delta_{k+1}/\Delta_{1}\in(0,1)$ for $k\geq1$ and $\sum_{k\geq1}\Delta_{k+1}/\Delta_{1}<+\infty$. Moreover, let $f_{\Delta_1}(\zeta) = \exp\{\theta \int_\zeta^{+\infty} \rho(s) \dd s\} \theta \rho(\zeta)$ be the density function of the distribution of $\Delta_1$. Then for a  function $h: \R_+ \rightarrow \R_+$, we introduce the random variable $\Delta_{1, h}$ whose distribution has density
\begin{equation}\label{tilting}
    f_{\Delta_{1, h}}(\zeta) = h(\zeta) f_{\Delta_1}(\zeta),
\end{equation}
provided that $\int  f_{\Delta_{1, h}}(\zeta) \dd \zeta = 1$.
If $F_\zeta$ is the conditional distribution of $(\Delta_{k+1} / \Delta_{1})_{k \geq 1}$ given $\Delta_{1} = \zeta$, then a SP prior is defined as the law of $\mu_{\Delta_{1, h}} = \sum_{k \geq 1} \tau_k \delta_{W_{k+1}}$ where $(\tau_k)_{k \geq 1}$ is distributed as $F_{\Delta_{1,h}}$. Inherent to the construction of SP priors is that the random jumps $\tau_k$'s of $\mu_{\Delta_{1, h}}$ are $(0, 1)$-valued, which is a critical limitation with respect to the choice the score $G_A$ of the observational process. That is, differently from CRM priors, SP priors apply only to trait processes $Z\sim\mathrm{TrP}(G_{A},\mu)$ for which $G_{A}$ is parameterized through a $(0,1)$-valued parameter, e.g. the Bernoulli process. To overcome this limitation, we introduce the class of T-SP priors.

\begin{definition}\label{tsp}
Let $\mu \sim \mbox{CRM}(\theta, \rho, B_0)$,  $\Delta_1 > \Delta_2 > \cdots$ the decreasingly ordered random jumps $\tau_{k}$'s of $\mu$, and $\Delta_{1,h}$ be the random variable whose distribution has density function \eqref{tilting}. Moreover, for any diffeomorphism $T: (0, 1) \rightarrow (T_0, T_1) \subset \R_+$ let $\mathcal L_\zeta(\cdot)$ be the conditional distribution of $\left(T(\Delta_{k+1} / \Delta_1)\right)_{k \geq 1}$ given $\Delta_1 = \zeta$. A T-SP prior is the law of the discrete random measure
\begin{displaymath}
        \mu= \sum_{k \geq 1} \tau_k \delta_{W_k},
\end{displaymath}
where $(\tau_k)_{k \geq 1}$ is distributed as $\int_{\R_+} \mathcal L_{\zeta}(\cdot) f_{\Delta_{1, h}}(\zeta) \dd \zeta$, and $(W_k)_{k \geq 1}$ is independent of $(\tau_k)_{k \geq 1}$ with the $W_{k}$'s being independent and identically distributed according to $B_0$. We write $\mu \sim \mathrm{T \mhyphen SP}(\theta, \rho, h,T,B_{0})$.
\end{definition}

Definition \ref{tsp} generalizes the definition of SP prior, in the sense that SP priors are recovered by setting $T$ to be the identity function. As CRM priors, T-SP priors apply to arbitrary trait processes, that is processes $Z\sim\mathrm{TrP}(G_{A},\mu)$ for which $G_{A}$ is parameterized through a non-negative parameter. Now, we consider the BNP model \eqref{exch_mod} with $\mu \sim \mathrm{T \mhyphen SP} (\theta, \rho, h,T,B_{0})$ and present a posterior analysis of $\mu$, extending some of the main results of \citet{Jam(15)} and \citet{Cam(23)}. We start by providing sufficient conditions that ensure the finiteness of the number of traits in a random sample from \eqref{exch_mod}, which is a prerequisite in trait allocations. In particular, if $Z$ is a random sample under \eqref{exch_mod}, i.e.  $Z = \sum_{k \geq 1} A_k \delta_{W_k}$, then a sufficient condition for $\sum_{k \geq 1} I_{\mathbb{R}_{+}}(A_k)= \sum_{k \geq 1} B_k < +\infty$ is
\begin{equation}\label{eq:finite_traits}
    \E_{\Delta_{1, h}} \left[ \int_{T_0}^{T_1} \pi_a(s) \frac{\rho(\Delta_{1, h} T^{-1}(s))}{T^\prime(T^{-1}(s))} \Delta_{1, h} \dd s \right] < +\infty.
\end{equation}
See  \Cref{app:proof_finite_traits} for a proof. To present our main results on posterior analysis, it is useful to introduce:
\begin{itemize}
\item[i)]
\begin{equation}\label{eq:rho_k}
 \rho_k(s, \zeta) = \left(1 - \pi_A(s) \right)^{k} \frac{\zeta \rho(\zeta T^{-1}(s))}{T^\prime(T^{-1}(s))} I_{(T_0, T_1)}(s);
\end{equation}
\item[ii)]
\begin{equation}\label{eq:psi_k}
    \psi_k(\zeta) = \int_{T_0}^{T_1} \left[1 - (1 - \pi_A(s))^{k} \right]\frac{\zeta \rho(\zeta T^{-1}(s))}{T^\prime(T^{-1}(s))} \dd s.
\end{equation}
\end{itemize}
The next theorem characterizes, with respect to the latent random variable $\Delta_{1,h}$, the distribution of a random sample $Z_{1:n}$ from \eqref{exch_mod} with $\mu \sim\text{T-SP}(\theta, \rho, h, T, B_{0})$. Such a distribution is known as the exchangeable trait probability function \citep{Jam(17),Cam(18)}.

\begin{theorem}\label{marg_tsp}
Let $Z_{1:n}$ be a random sample under \eqref{exch_mod} with $\mu \sim \mathrm{T \mhyphen SP} (\theta, \rho, h,T,B_{0})$, such that $Z_{1:n}$ displays $k_{n}$ traits $\{W^*_1, \ldots, W^*_{k_{n}}\}$ with frequencies $(m_1, \ldots, m_{k_{n}})$ and associated index sets $\{\mathcal B_1, \ldots ,\mathcal B_{k_n}\}$. Then, the conditional distribution of $Z_{1:n}$ given $\Delta_{1,h}$ is of the form
    \begin{equation}\label{eq:marg}
          \theta^{k_{n}} e^{-\theta \psi_n(\Delta_{1, h})} \prod_{l=1}^{k_{n}} \int_{T_0}^{T_1} \rho_{n - m_l}(s, \Delta_{1, h}) \pi_A(s)^{m_l} \prod_{i \in \mathcal{B}_l} \left[\tilde G_{A^\prime}(\dd a_{i, k}; s) \right] \dd s B_0(\dd W^*_l).
\end{equation}
\end{theorem}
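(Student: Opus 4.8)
The plan is to condition on the latent scaling variable $\Delta_{1,h}$ and thereby reduce the computation of the exchangeable trait probability function to a marginalization over a Poisson random measure, to which Campbell's theorem (equivalently, the marking and mapping theorems for Poisson processes) applies.

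The first step is to identify the conditional law of the jumps $(\tau_k)_{k\geq1}$ of $\mu$ given $\Delta_{1,h}=\zeta$. Since $\mu\sim\mbox{CRM}(\theta,\rho,B_0)$, its ordered jumps $\Delta_1>\Delta_2>\cdots$ are the points of a Poisson random measure on $\R_+$ with intensity $\theta\rho(s)\,\dd s$; by independence over disjoint regions, conditionally on $\Delta_1=\zeta$ the remaining jumps $\Delta_2>\Delta_3>\cdots$ are the points of a Poisson random measure on $(0,\zeta)$ with the same intensity. Rescaling by $\zeta^{-1}$ and then applying the diffeomorphism $T$, the mapping theorem yields that $\bigl(T(\Delta_{k+1}/\zeta)\bigr)_{k\geq1}$ is a Poisson random measure on $(T_0,T_1)$ with intensity $\theta\,\zeta\rho(\zeta T^{-1}(s))/T^\prime(T^{-1}(s))\,\dd s$; write $\theta\,\widetilde\rho_\zeta(s)\,\dd s$ for it. By the definition of the T-SP prior (Definition \ref{tsp}) the conditional law $\mathcal L_\zeta$ of $(\tau_k)_{k\geq1}$ given $\Delta_{1,h}=\zeta$ coincides with this law — the $h$-tilting reweights only the marginal of the scaling variable, not the conditional law of the transformed jumps — and, since the $(W_k)_{k\geq1}$ are i.i.d.\ $B_0$ and independent of the jumps, conditionally on $\Delta_{1,h}=\zeta$ the pairs $\{(\tau_k,W_k)\}$ form a Poisson random measure $N_\zeta$ on $(T_0,T_1)\times\mathbb{W}$ with intensity $\theta\,\widetilde\rho_\zeta(s)\,\dd s\,B_0(\dd w)$.

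The second step is to marginalize the sampling model over $\mu$. Conditionally on $\mu$, I would attach to each atom $(\tau_k,W_k)$ of $N_\zeta$ the independent marks $(A_{1,k},\dots,A_{n,k})$ with $A_{i,k}\sim G_A(\tau_k)$; by the marking theorem the resulting marked process is Poisson on $(T_0,T_1)\times\mathbb{W}\times\mathbb{N}_0^n$ with intensity $\theta\,\widetilde\rho_\zeta(s)\,\dd s\,B_0(\dd w)\prod_{i=1}^n G_A(\dd a_i;s)$. The sample $Z_{1:n}$ is the deterministic image of this marked process that records, for each atom with at least one positive mark, its location, the index set of observations displaying it, and the associated values; so, by the elementary description of a Poisson random measure through the Poisson count of points in the ``displayed'' region and the i.i.d.\ law of their positions, the conditional law of $Z_{1:n}$ given $\Delta_{1,h}=\zeta$ should equal
\[
\exp\!\Bigl\{-\theta\!\int_{T_0}^{T_1}\!\bigl(1-\Pr[A_i=0\ \forall i;\,s]\bigr)\,\widetilde\rho_\zeta(s)\,\dd s\Bigr\}\prod_{l=1}^{k_n}\theta\!\int_{T_0}^{T_1}\!\Pr[\text{pattern }l\mid s]\,\widetilde\rho_\zeta(s)\,\dd s\,B_0(\dd W^*_l),
\]
the symmetry factor $k_n!$ from the orderings of the displayed atoms cancelling the $1/k_n!$ from the Poisson count, as in standard exchangeable-partition computations. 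Substituting the spike-and-slab form of $G_A$ gives $\Pr[A_i=0\ \forall i;\,s]=(1-\pi_A(s))^n$, so $\int_{T_0}^{T_1}(1-(1-\pi_A(s))^n)\,\widetilde\rho_\zeta(s)\,\dd s=\psi_n(\zeta)$ as in \eqref{eq:psi_k}, while $\Pr[\text{pattern }l\mid s]=(1-\pi_A(s))^{n-m_l}\pi_A(s)^{m_l}\prod_{i\in\mathcal B_l}\tilde G_{A^\prime}(\dd a_{i,l};s)$ and $\widetilde\rho_\zeta(s)(1-\pi_A(s))^{n-m_l}=\rho_{n-m_l}(s,\zeta)$ as in \eqref{eq:rho_k}; this is exactly \eqref{eq:marg}. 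That $\psi_n(\Delta_{1,h})<+\infty$ almost surely, needed both for the exponential factor and for the finiteness of the product over $l$, follows from \eqref{eq:finite_traits} and the bound $1-(1-x)^n\le nx$.

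I expect the main obstacle to be the first step: rigorously justifying that conditioning on $\Delta_{1,h}$ collapses the dependence among the jumps created by the normalization by $\Delta_1$ and restores an exact Poisson structure, with a $\zeta$-dependent intensity that is finite on $(T_0,T_1)$ after the thinning by $1-(1-\pi_A)^n$. Once this conditional Poissonization is established, the remainder is a routine application of the marking and mapping theorems together with the change of variables $s=T(u)$, the only delicate point being the bookkeeping of the $k_n!$ symmetry factor. An alternative, shorter route I would keep in mind is to deduce the statement directly from the corresponding exchangeable feature probability function for SP priors in \citet{Jam(15)} and \citet{Cam(23)}, observing that the diffeomorphism $T$ merely transports the relevant L\'evy intensity via the change of variables above.
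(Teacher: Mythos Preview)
Your proposal is correct and follows essentially the same route as the paper: first establish that, conditionally on $\Delta_{1,h}=\zeta$, the T-SP collapses to a CRM with L\'evy intensity $\theta\,\widetilde\rho_\zeta(s)\,\dd s\,B_0(\dd w)$ (this is the paper's Lemma~\ref{lemma:conditional_levy}, proved exactly by your rescaling-plus-mapping argument), then read off the marginal likelihood of $Z_{1:n}$ from the standard CRM result. The only cosmetic difference is that the paper invokes \citet[Proposition~3.1]{Jam(17)} for the second step and then simplifies $\sum_{i=1}^n\varphi_i$ to $\psi_n$ via the geometric-series identity, whereas you redo that Poisson-marking computation by hand and arrive at $\psi_n$ directly.
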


See \Cref{app:proof_marg} for the proof of Theorem \ref{marg_tsp}. The exchangeable trait probability function of a T-SP prior follows from \eqref{eq:marg} by integrating with respect to the distribution of $\Delta_{1,h}$, whose density function is \eqref{tilting}. The next theorem characterizes, still with respect to the latent variable $\Delta_{1,h}$, the posterior distribution and the predictive distribution of T-SP priors.

\begin{theorem}\label{teo:post_tsp}
Let $Z_{1:n}$ be a random sample under \eqref{exch_mod} with $\mu \sim \mathrm{T \mhyphen SP} (\theta, \rho, h,T,B_{0})$, such that $Z_{1:n}$ displays $k_{n}$ traits $\{W^*_1, \ldots, W^*_{k_{n}}\}$ with frequencies $(m_1, \ldots, m_{k_{n}})$ and associated index sets $\{\mathcal B_1, \ldots ,\mathcal B_{k_n}\}$. The conditional distribution of $\Delta_{1, h}$ given $Z_{1:n}$ has density function 
\begin{equation}\label{eq:post_delta}
   f_{\Delta_{1, h} \mid Z_{1:n}} (\zeta) \propto e^{-\theta \psi_n(\zeta)} \prod_{l=1}^{k_{n}} \int_{T_0}^{T_1} \rho_{n - m_l}(s, \zeta) \prod_{i \in \mathcal{B}_l} \left[\tilde G_{A^\prime}(\dd a_{i, l}; s) \dd s \right]  f_{\Delta_{1, h}}(\zeta).
\end{equation}
Moreover:
\begin{itemize}
\item[i)] the conditional distribution of $\mu$, given $Z_{1:n}$ and $\Delta_{1,h}$, coincides with the distribution of
\begin{equation}\label{post_tsp}
        \mu\,|\,(\Delta_{1, h},\,Z_{1:n})\stackrel{\text{d}}{=}\mu^\prime_{\Delta_{1, h}} + \sum_{l=1}^{k_{n}} J^*_l \delta_{W^*_l},
\end{equation}
    where $\mu^\prime_{\Delta_{1, h}}$ is a CRM on $\mathbb{W}$ with L\'evy intensity measure $\rho_{n}(s, \Delta_{1, h}) \dd s \theta B_0(\dd x)$, where $\rho_n$ is as in \eqref{eq:rho_k},
and the $J^*_l$'s are independent random jumps, also independent of $\mu^\prime_{\Delta_{1, h}}$, with density function
\begin{equation}\label{eq:post_jumps}
 f_{J^*_l}(s) \propto \pi_A(s)^{m_l} \rho_{n - m_l}(s, \Delta_{1, h}) \prod_{i \in \mathcal{B}_l} \tilde G_{A^\prime}(\dd a_{i, l}; s) I_{(T_0, T_1)}(s);
\end{equation}
\item[ii)] the conditional distribution of $Z_{n+1}$, given $Z_{1:n}$ and $\Delta_{1,h}$, coincides with the distribution of
\begin{equation}\label{pred_tsp}
        Z_{n+1}\,|\,(\Delta_{1, h},\,Z_{1:n})\stackrel{\text{d}}{=}Z^\prime_{n+1} + \sum_{l=1}^{k_{n}} A_{n+1, l} \delta_{W^*_l},
\end{equation}
    where $Z^\prime_{n+1}\,|\, \mu^\prime_{\Delta_{1, h}}=\sum_{k\geq1}A^{\prime}_{n+1,k}\delta_{W^{\prime}_{k}}\sim\mathrm{TrP}(G_A,  \mu^\prime_{\Delta_{1, h}})$ is independent of the $A_{n+1, l}$'s, with the $A_{n+1, l}$'s being independent random variables such that  $A_{n+1, l}\,|\,J^*_l \sim G_A( J^*_l)$, for any $l=1,\ldots,k_{n}$.
\end{itemize}
\end{theorem}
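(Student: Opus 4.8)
The plan is to prove all three claims by conditioning on the latent variable $\Delta_{1,h}$ and reducing everything to the CRM posterior analysis of Theorem~\ref{teo:post_crm}. The one structural fact that makes this work is that, \emph{conditionally on} $\Delta_{1,h} = \zeta$, a T-SP prior is an ordinary CRM.

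First I would establish this conditional representation. By the Poisson-process description of $\mu \sim \mathrm{CRM}(\theta,\rho,B_0)$ \citep{Kin(93),Jam(15)}, its jumps form a Poisson random measure on $\R_+$ with intensity $\theta\rho(x)\,\dd x$; the largest jump $\Delta_1$ has density $f_{\Delta_1}$, and, conditionally on $\Delta_1 = \zeta$, the remaining jumps $\Delta_2 > \Delta_3 > \cdots$ form a Poisson random measure on $(0,\zeta)$ with intensity $\theta\rho(x)\,\dd x\,I_{(0,\zeta)}(x)$, since the restriction of a Poisson process to a subinterval is an independent Poisson process. Hence, conditionally on $\Delta_1 = \zeta$, the points $\{\Delta_{k+1}/\zeta\}_{k\geq 1}$ form a Poisson random measure on $(0,1)$ with intensity $\theta\zeta\rho(\zeta y)\,\dd y$, and, after the change of variables $s = T(y)$ (so $\dd y = \dd s / T^\prime(T^{-1}(s))$), the points $\{T(\Delta_{k+1}/\zeta)\}_{k\geq 1}$ form a Poisson random measure on $(T_0,T_1)$ with intensity $\theta\rho_0(s,\zeta)\,\dd s$, where $\rho_0$ is \eqref{eq:rho_k} with $k=0$. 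Crucially, the tilting \eqref{tilting} changes only the marginal law of $\Delta_1$ and not the conditional law $\mathcal L_\zeta$ of the transformed normalized jumps, so the same description holds with $\Delta_{1,h}$ in place of $\Delta_1$. Attaching the i.i.d.\ $B_0$-marks $(W_k)_{k\geq 1}$, we conclude that, conditionally on $\Delta_{1,h} = \zeta$, the random measure $\mu$ is a CRM with homogeneous L\'evy intensity $\theta\rho_0(s,\zeta)\,\dd s\,B_0(\dd w)$; note that $\int_{T_0}^{T_1}\rho_0(s,\zeta)\,\dd s = \int_0^\zeta \rho(x)\,\dd x = +\infty$, and finiteness of the number of traits in a sample is precisely condition \eqref{eq:finite_traits}.

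Items (i) and (ii) then follow by conditioning further on $Z_{1:n}$ and applying Theorem~\ref{teo:post_crm} verbatim, with $\rho$ replaced by $\rho_0(\cdot,\zeta)$. Indeed, the posterior of $\mu$ given $(\Delta_{1,h}=\zeta, Z_{1:n})$ equals $\mu^\prime_\zeta + \sum_{l=1}^{k_n} J^*_l \delta_{W^*_l}$, where $\mu^\prime_\zeta$ is a CRM with intensity $\theta(1-\pi_A(s))^n\rho_0(s,\zeta)\,\dd s\,B_0(\dd w) = \theta\rho_n(s,\zeta)\,\dd s\,B_0(\dd w)$, matching \eqref{post_tsp}, and the independent jumps have densities $\propto (1-\pi_A(s))^{n-m_l}\pi_A(s)^{m_l}\prod_{i\in\mathcal B_l}\tilde G_{A^\prime}(\dd a_{i,l};s)\,\rho_0(s,\zeta) = \pi_A(s)^{m_l}\rho_{n-m_l}(s,\zeta)\prod_{i\in\mathcal B_l}\tilde G_{A^\prime}(\dd a_{i,l};s)\,I_{(T_0,T_1)}(s)$, matching \eqref{eq:post_jumps}. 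The predictive statement \eqref{pred_tsp} is obtained from this posterior exactly as \eqref{eq:pred_crm} is obtained from the CRM posterior: writing $Z_{n+1}\mid\mu \sim \mathrm{TrP}(G_A,\mu)$ and splitting $\mu$ into its non-atomic part $\mu^\prime_\zeta$, which generates the ``new'' traits $Z^\prime_{n+1}$, and its fixed atoms $J^*_l\delta_{W^*_l}$, at which $A_{n+1,l}\mid J^*_l \sim G_A(J^*_l)$, and using independence of $Z_{n+1}$ over disjoint regions of $\mathbb{W}$. Finally, \eqref{eq:post_delta} is Bayes' theorem: $f_{\Delta_{1,h}\mid Z_{1:n}}(\zeta) \propto \mathcal L(Z_{1:n}\mid\Delta_{1,h}=\zeta)\, f_{\Delta_{1,h}}(\zeta)$, where the conditional likelihood $\mathcal L(Z_{1:n}\mid\Delta_{1,h}=\zeta)$ is precisely the expression in \eqref{eq:marg} from Theorem~\ref{marg_tsp}; dropping the factor $\theta^{k_n}$ and the $B_0(\dd W^*_l)$'s, which do not depend on $\zeta$, yields the stated proportionality. (Equivalently, one integrates the joint law of $(\mu, Z_{1:n})$ given $\zeta$ over $\mu$.)

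The main obstacle is the first step: carefully transferring the conditional law of the ranked normalized jumps of a CRM (the classical Perman--Pitman--Yor description underlying $F_\zeta$) through the diffeomorphism $T$, checking that the $h$-tilting leaves $\mathcal L_\zeta$ unchanged, and verifying that the resulting intensity $\rho_0(\cdot,\zeta)$ genuinely falls within the scope of Theorem~\ref{teo:post_crm} (in particular $\int\rho_0(\cdot,\zeta) = +\infty$ and the finiteness afforded by \eqref{eq:finite_traits}). Once $\mu \mid \Delta_{1,h} = \zeta$ is identified as a bona fide CRM, the remaining content of the theorem is bookkeeping built on Theorems~\ref{teo:post_crm} and~\ref{marg_tsp}: disintegrating the joint law of $(\mu, Z_{1:n})$ given $\zeta$, tracking the normalizing constants, and checking that the $\zeta$-dependence organizes into the functions $\rho_k$ and $\psi_k$ of \eqref{eq:rho_k}--\eqref{eq:psi_k}.
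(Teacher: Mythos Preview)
Your proposal is correct and follows essentially the same route as the paper: condition on $\Delta_{1,h}$, identify $\mu\mid\Delta_{1,h}=\zeta$ as a CRM with L\'evy intensity $\theta\rho_0(\cdot,\zeta)\,B_0$, apply Theorem~\ref{teo:post_crm} for parts (i)--(ii), and obtain \eqref{eq:post_delta} via Bayes' theorem using the conditional likelihood of Theorem~\ref{marg_tsp}. The only organizational difference is that the paper factors out your ``first step'' (the conditional-CRM identification) as a standalone lemma proved via the Laplace functional, whereas you derive it inline through the Poisson-process change of variables; both arguments are equivalent.
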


See \Cref{app:proof_post} for the proof of Theorem \ref{teo:post_tsp}. The posterior distribution and the predictive distribution of a T-SP prior follows directly from \eqref{post_tsp} and \eqref{pred_tsp}, respectively, by integrating with respect to the conditional distribution of $\Delta_{1, h}$ given $Z_{1:n}$, whose density function is displayed in \eqref{eq:post_delta}. In particular, this leads to a predictive distribution for which: i) the conditional probability of observing ``new" traits $W_{k}^{\prime}$'s, given  $Z_{1:n}$, is determined by the law of $(\Delta_{1,h},Z^{\prime}_{n+1})$; ii) the conditional probability of observing an ``old" trait $W_{l}^{\ast}$, given $Z_{1:n}$, is determined by the law of  $(\Delta_{1,h},A_{n+1, l})$. As the distribution of $(\Delta_{1,h},Z^{\prime}_{n+1})$ may include the whole information from $Z_{1:n}$, depending on the specification of $(\rho,h,T)$, the conditional probability of observing ``new"  traits, given $Z_{1:n}$, may also include such an information, i.e.
\begin{align*}
&\Pr\left(Z_{n+1}(\mathbb W \setminus \{W^*_1, \ldots, W^*_{k_{n}}\}) > 0\,|\, Z_{1:n} \right)\\
&\quad= \int \Pr(Z^{\prime}_{n+1}(\mathbb W) > 0 \mid \mu^\prime_{\Delta_1, h} = \eta) \mathcal{P}_{\mu^\prime_{\Delta_{1, h}} \mid \Delta_{1, h} = \zeta}(\dd \eta) f_{\Delta_{1, h} \mid Z_{1:n}}(\dd \zeta).     
\end{align*}
As a corollary of Theorem \ref{teo:post_tsp}, the posterior distribution of the number of ``new" traits in $Z_{n+1}$, given $Z_{1:n}$ and fixed prior's parameters, is a mixture of Poisson distributions that includes an amount of information from $Z_{1:n}$ that is completely determined by the mixing distribution, namely the conditional distribution of $\Delta_{1,h}$, given $Z_{1:n}$. Thus, T-SP priors enrich the Poisson posterior structure arising from CRM priors, leading to a more flexible posterior distribution that allows to include more sampling information than the sole sample size $n$.

\subsection{Examples}

We specialize our posterior analysis of T-SP priors to some popular choices of the score distribution $G_A$:  Bernoulli distribution, Poisson distribution, and negative-Binomial distribution. We focus on ST-SP priors, that is, we assume that $\mu$ is an $\alpha$-Stable CRM \citep{Kin(75)}. For $\alpha\in(0,1)$, this is a CRM with L\'evy intensity $\nu(\dd x,\, \dd x) = \alpha x^{-1 - \alpha} \theta B_0(\dd w)$, such that 
\begin{displaymath}
    f_{\Delta_1}(\zeta) = \theta \alpha \zeta^{-1 - \alpha} \exp(- \theta \zeta^{-\alpha}) \indicator_{\R_+}(\zeta)
\end{displaymath}
or equivalently, $\Delta_1^{- \alpha}$ is exponentially distributed with parameter $\theta$. See, e.g., \citet{Jam(15)}. Moreover, we assume a function $h$ that provides a polynomial tilting for $\Delta_{1, h}^{-\alpha}$, that is
\begin{displaymath}
    h_{c}(\zeta^{-\alpha}) = \frac{\theta^{c-1}}{\Gamma(c)} \zeta^{-\alpha(c - 1)},
\end{displaymath}
which leads to $\Delta_{1, h}^{-\alpha} \sim \mbox{Gamma}(c, \theta)$. Under these assumptions, we show how Theorem \ref{teo:post_tsp} leads to a simple posterior distribution of $\Delta_{1, h}$, given $Z_{1:n}$, and hence a simple posterior distribution and predictive distribution for ST-SP priors. The proofs of all the results presented below are deferred to \Cref{app:example_details}, where we also show an example where $G_A$ is a mixture of a continuous distribution with a point mass at zero, i.e., a spike and slab distribution.

\subsubsection{Bernoulli distribution}\label{sec:sbsp}

The case where $G_A$ is the Bernoulli distribution with parameter $s \in (0, 1)$ has been investigated in \cite{Cam(23)}. For completeness, we report the marginal, posterior, and predictive distributions for a ST-SP prior with $T$ being the identity transform.  Defining $\gamma_0^{(n)} = \alpha \sum_{i=1}^n B(1-\alpha, i)$, where $B(\cdot, \cdot)$ is the Beta function, we have $\Delta_{1, h}^{-\alpha}\,|\, Z_{1:n} \sim \Gammad(k + c + 1 ,\theta + \gamma_0^{(n)})$, so that the marginal distribution of $Z_{1:n}$ can be seen to be
\begin{displaymath}
    \frac{\alpha^k \theta^{c+1}}{(\theta + \gamma_0^{(N)})^{k_n + c +1}} \frac{\Gamma(k + c + 1)}{\Gamma(c+1)}\prod_{l=1}^{k_n} \frac{\Gamma(m_l - \alpha) \Gamma(n - m_l + 1)}{\Gamma(n - \alpha + 1)} B_0(\dd W^*_l).
\end{displaymath}
Furthermore, the jumps $J^{\ast}_{l}$ in \eqref{eq:post_jumps} are independent and Beta distributed with parameter $(m_l - \alpha, n - m_l + 1)$, and the random measure $\mu^\prime_{\Delta_{1, h}}$, conditionally to $\Delta_{1, h}$, is a CRM with L\'evy intensity
\begin{displaymath}
    \Delta_{1, h}^{-\alpha} (1-s)^n \alpha s^{-1-\alpha} \dd s \, \theta B_0(\dd x).
\end{displaymath}

\subsubsection{Poisson distribution}\label{sec:pois}

Now, we consider $G_{A}$ to be a Poisson distribution with parameter $rs$, for fixed $r > 0$ and $s \in \R_+$. In particular, in this case we have $\pi_A(s) = 1 - e^{-rs}$ and $\tilde G_{A^\prime}(a) = e^{-rs} (rs)^a \left(a! (1 - e^{-rs})\right)^{-1}$. It is easy to check that $T(s) := -\log(1-s)$ satisfies \eqref{eq:finite_traits}.
Moreover,
\begin{equation}\label{eq:i_rk}
    \psi_k(\zeta) = \alpha \zeta^{-\alpha} I(r, k), \qquad I(r, k) = \int_{\R_+} (1- e^{-rsk}) (1 - e^{-s})^{-1-\alpha} e^{-s} \dd s
\end{equation}
Then, the posterior distribution of $\Delta_{1,h}$ in \eqref{eq:post_delta} simplifies to $\Delta_{1, h}^{-\alpha}\,|\, Z_{1:n} \sim \mbox{Gamma}(c+k_n, \theta [ 1 + \alpha I(r, n)])$. By marginalizing with respect to $\Delta_{1, h}$ in \eqref{eq:marg}, we obtain the distribution of $Z_{1:n}$, i.e.,
\[
    \frac{\Gamma(c + k_n) \alpha^k \prod_{l=1}^{k_n} F(n, q_l, r, \alpha)}{\Gamma(c) \left(1 + \alpha I(r, n) \prod_{l=1}^{k_n} \prod_{i \in \mathcal B_l} a_{il}!\right)} \prod_{l=1}^{k_n} B_0(\dd W^*_l),
\]
where $q_l := \sum_{i \in \mathcal B_l} a_{i,l}$ and
\[
    F(n, c, r, \alpha) = \int_{\R_+} e^{- r s n} e^{-s} (1 - e^{-s})^{- 1 - \alpha} (rs)^c \dd s.
\]
It is also possible to marginalize with respect to $\Delta_{1, h}$ in the posterior distribution for the jumps $J^*_l$'s in \eqref{eq:post_jumps}, though the resulting density does not belong to a known parametric family.

From \cite{Jam(17)}, the number of ``new" traits $U^1_n$ displayed in $Z_{n+1}$ is distributed as a negative-Binomial distribution. In particular, $Z_{n+1}\,|\, \Delta_{1, h}$ displays a Poisson number of ``new" traits,  $U^1_n \mid \Delta_{1, h} \sim \mbox{Poi}(\theta \phi_{n+1}(\Delta_{1, h}))$, where $\phi_{n+1} = \psi_{n+1} - \psi_n$.
In our case, $\phi_k(\zeta) = \alpha \zeta^{-\alpha} \tilde I(r, k)$, where 
\begin{displaymath}
    \tilde I(r, k) = \int_{\R_+} (1 - e^{-rs}) e^{- rs(k+1)} e^{-s} (1 - e^{-s})^{- 1 - \alpha} \dd s.
\end{displaymath}
Since $\theta \phi_{n+1}(\Delta_{1,h})\,|\,Z_{1:n}$ is Gamma distributed with parameters $c+k_n$ and $\beta := (1 + \alpha I(r, n)) / (\alpha \tilde I(r, n+1))$, then  $U^1_n$ follows a negative-Binomial distribution with parameters $c + k_n$ and $p = \frac{\beta}{1 + \beta} = \frac{1 + \alpha I(r, n)}{1 + \alpha I(r, n+1)}$. The distribution of the number of ``new" traits $U^m_n$ displayed in $m\geq1$ additional samples $(Z_{n+1}, \ldots, Z_{n+m})$ can be derived in an analogous way. From \cite{Jam(17)}, given $Z_{1:n}$ and $\Delta_{1, h}$ each $Z_{n+j}$ displays $K^\prime_{n+j} \sim \mbox{Poi}(\theta \phi_{n+j}(\Delta_{1, h}))$ new features.
The $K^\prime_{n+j}$'s are conditionally independent given $\Delta_{1, h}$. Therefore, we can write
\[
    U^m_n = \sum_{j=1}^m K^{\prime}_{n+j}\,|\, \Delta_{1, h} \sim \mbox{Poi} \left(\theta \sum_{j=1}^m \phi_{n+j}(\Delta_{1, h}) \right) \equiv \mbox{Poi} \left(\theta \alpha \Delta_{1,h}^{-\alpha} (I(r, n+m) - I(r, n)) \right),
\]
Then, mixing the Poisson distribution with respect to $\Delta_{1, h}^{-\alpha} \sim \mbox{Gamma}(c+k_n, \theta [ 1 + \alpha I(r, n)])$ we obtain that $U^m_n$ follows a negative binomial distribution with parameters $c + k_n$ and $p =  \frac{1 + \alpha I(r, n)}{1 + \alpha I(r, n+m)}$.

\subsubsection{Negative-Binomial distribution}\label{sec:nb}

Following \cite{Jam(17)}, we consider $G_{A}$ to be a negative-Binomial distribution. That is, we assume
\begin{equation}\label{eq:nb_pmf}
    G_A(a) = \binom{a + r - 1}{a}(1 - e^{-s})^a e^{-sr}
\end{equation}
where $r >0$ is fixed and $s \in \R_+$. In particular, we have $\pi_A(s) = 1 - e ^{-sr}$ and $\tilde G_{A^\prime}(a) = G_A(a) / \pi_A(s)$. That is, $\pi_A$ coincides with the expression found for the Poisson example. Then, we proceed as in \Cref{sec:pois} and set $T(s) = -\log(1- s)$, which yields  the same expressions for $\rho_k(s, \zeta)$, $\psi_k(\zeta)$, $I(r, k)$.
Specifically, this entails that the posterior distribution for $\Delta_{1, h}$ is the same found in \Cref{sec:pois} and also the distribution for the number of new traits $U_{n}^m$ agrees with what was previously found. Instead, the marginal distribution of $Z_{1:n}$ equals
\begin{equation}\label{eq:marg_nb}
    \frac{\Gamma(c + k_n) \alpha^{k_n} \prod_{l=1}^{k_n} B(rn + 1, q_l - \alpha)}{\Gamma(c) \left(1 + \alpha \sum_{i=1}^n I(r, i)\right)^{c+k_n}} \prod_{l=1}^{k_n} \prod_{i \in B_l} \binom{a_{i, l} + r - 1}{a_{i, l}} B_0(\dd W^*_l),
\end{equation}
where $q_l := \sum_{i \in \mathcal B_l} a_{i,l}$.

\section{Hierarchical T-SP priors and multi-group IBP}\label{sec3}

We extend our analysis to the more general setting of trait allocations with multiple groups or subpopulations. In particular, we assume that observations are modeled as a random sample $(Z_{1,1},\ldots,Z_{1,n_{1}},\ldots,Z_{j,i},\ldots,Z_{J,1},\ldots,Z_{g,n_{g}})$, with $j$ being the index of population, such that
\begin{align}\label{eq:multi_ibp}
Z_{j,1},\ldots,Z_{j,n_{j}}\,|\,(\mu_1,\ldots \mu_g) & \quad\simiid\quad \mathrm{TrP}(G_{A},\mu_j)\\[0.2cm]
\notag (\mu_1, \ldots \mu_g) & \quad\sim\quad \mathscr{M}_g,
\end{align}
for $j=1,\ldots,g$, where the $\mu_{j}$'s are discrete random measures on $\W$. By means of de Finetti's representation theorem, the $Z_{j,i}$'s in \eqref{eq:multi_ibp} are partially exchangeable with directing (de Finetti) measure $\mathscr{M}_g$. In this context, \cite{Mas(18)} proposed a hierarchical CRM (hCRM) prior for $(\mu_1, \ldots \mu_g)$,  by first letting $\mu_0 =  \sum_{k \geq 1} J_{0, k} \delta_{W_k} \sim \mbox{CRM}(\theta, \rho_0, B_0)$ and then setting
\begin{displaymath}
    \mu_j \mid \mu_0 = \sum_{k \geq 1} J_{j, k} \delta_{W_k}, \quad J_{j,k} \mid \mu_0 \simiid f_{J}(\cdot; J_{0,k}, r_j), \quad j=1, \ldots, g,
\end{displaymath}
where $f_J$ is a density function on $\R_+$ with parameters $(J_{0,k}, r_j)$. For short, we write $(\mu_1, \ldots, \mu_g) \sim \mathrm{hCRM}(f_{J}, \theta \rho_0, B_0)$. We refer to \cite{Mas(18)} for a  BNP  analysis of hCRM priors in the context of trait allocations. In the next theorem, we recall the posterior distribution of hCRMs. Similarly to the non-hierarchical setting, we define the index set of the $l$-th trait as $\mathcal{B}_{j, l} = \left\{i \in \{1, \ldots, n_j\}  \mbox{ s.t. } Z_{j,i}(W^*_l) > 0  \right\}$ and denote by $m_{j,l}$ its cardinality.

\begin{theorem}\label{teo:post_hcrm}
    Let $(Z_{j, i},  j=1, \ldots, g, \, i = 1, \ldots, n_j)$ be a random sample under \eqref{eq:multi_ibp}, with $(\mu_1, \ldots \mu_g) \sim \mathrm{hCRM}(f_{J}, \theta \rho_0, B_0)$, such that the sample displays $k_n$ traits $\{W^*_1, \ldots, W^*_{k_n}\}$ with frequencies $(m_{j,l}, j=1, \ldots, g, \ l=1, \ldots, k_n)$ and associated index sets $\{\mathcal{B}_{1, 1},\ldots,\mathcal{B}_{g, k_{n}}\}$. Then, the conditional distribution of $(\mu_1, \ldots, \mu_g)$, given $\{Z_{j,i}\}_{j,i}$, coincides with the distribution of
\begin{displaymath}
\left( \mu^\prime_1 + \sum_{l=1}^{k_n} J^*_{1, l} \delta_{W^*_l}, \ldots,  \mu^\prime_g + \sum_{l=1}^{k_n} J^*_{g, l} \delta_{W^*_l}\right),
\end{displaymath}
where 
\begin{itemize}
\item[i)] $(\mu^\prime_1, \ldots, \mu^\prime_g) \sim \mathrm{hCRM}(f^\prime_{J}, \rho^\prime_0, B_0)$ with
\begin{displaymath}
       \rho^\prime_0(s_0) = \prod_{j=1}^g \int_{\R_+} (1 - \pi_A(s_j))^{n_j} f_{J}(s_j; s_0, r_j) \dd s_j \, \rho_0(s_0)
\end{displaymath}
and
\begin{displaymath}
    f^\prime_{J}(s; s_0, r_j) \propto (1 - \pi_A(s))^{n_j} f_{J}(s; s_0, r_j);
\end{displaymath}
\item[ii)] the jumps $\{J^*_{j, l}\}$ are such that
\begin{displaymath}
         J^*_{j, l} \mid J^*_{0, l} \simind f_{J^*_{j, l}}(s)  \propto   (1 - \pi_A(s))^{n_j - m_{j, l}} \pi_A(s)^{m_{j, l}} \prod_{i \in \mathcal{B}_{j, l}}  \tilde G_{A^\prime}(\dd a_{j, i, l} \mid s) f_{J}(s; J^*_{0, l}, r_j) 
\end{displaymath}
and
\begin{displaymath}
         J^*_{0, l} \simind f_{J^*_{0, l}}(s_0)  \propto \prod_{j=1}^g\left[ \int_{\R_+} g_{J^*_{j, l}}(s_j; s_0) \dd s_j \right] \rho_0(s_0).
\end{displaymath}
\end{itemize}   
\end{theorem}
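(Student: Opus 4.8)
The plan is to reduce the hierarchical posterior to a single application of the Poisson/Palm disintegration that already underlies Theorem~\ref{teo:post_crm}, carried out on an augmented jump space that absorbs the second layer of the hierarchy. By the marking theorem for Poisson processes, the family of atoms $\{(J_{0,k},J_{1,k},\ldots,J_{g,k},W_k)\}_{k\geq1}$ is a Poisson random measure $\tilde N$ on $\R_+^{g+1}\times\W$ with intensity $\theta\,\rho_0(s_0)\prod_{j=1}^{g} f_J(s_j;s_0,r_j)\,\ud s_0\cdots\ud s_g\,B_0(\ud w)$; equivalently one may integrate out $(\mu_1,\ldots,\mu_g)$ and attach to $\mu_0$ the effective per-atom likelihood computed next. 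Using the spike-and-slab decomposition $G_A(a;s)=\pi_A(s)\tilde G_{A^\prime}(a;s)$ for $a\geq1$ and $G_A(0;s)=1-\pi_A(s)$, the conditional law of the data given $\tilde N$ factorizes over the atoms of $\tilde N$: an atom located at an observed trait $W^*_l$ contributes $\prod_{j=1}^{g}(1-\pi_A(s_j))^{n_j-m_{j,l}}\pi_A(s_j)^{m_{j,l}}\prod_{i\in\mathcal{B}_{j,l}}\tilde G_{A^\prime}(a_{j,i,l};s_j)$, and every other atom contributes $\prod_{j=1}^{g}(1-\pi_A(s_j))^{n_j}$.

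Next I would invoke the disintegration. Since the likelihood is a product over the atoms of a Poisson process, the Palm-calculus argument behind Theorem~\ref{teo:post_crm} (exponential tilting of the Poisson intensity by the ``absent-atom'' factor, together with Palm versions of $\tilde N$ at the locations $W^*_1,\ldots,W^*_{k_n}$; see \citet{Jam(17)} and references therein) applies to $\tilde N$ without change. It gives that, conditionally on the data, $\tilde N$ splits into an independent ``ordinary'' Poisson random measure with intensity $\theta\,\rho_0(s_0)\prod_{j=1}^{g}(1-\pi_A(s_j))^{n_j}f_J(s_j;s_0,r_j)\,\ud s_0\cdots\ud s_g\,B_0(\ud w)$, plus, for each $l=1,\ldots,k_n$, one fixed atom at $W^*_l$ whose jump vector $(J^*_{0,l},J^*_{1,l},\ldots,J^*_{g,l})$ has density on $\R_+^{g+1}$ proportional to $\rho_0(s_0)\prod_{j=1}^{g}(1-\pi_A(s_j))^{n_j-m_{j,l}}\pi_A(s_j)^{m_{j,l}}\prod_{i\in\mathcal{B}_{j,l}}\tilde G_{A^\prime}(a_{j,i,l};s_j)f_J(s_j;s_0,r_j)$, independently over $l$; here the non-atomicity of $B_0$ is what forces the observed traits to be fixed atoms. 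Reading off the $(s_1,\ldots,s_g)$-coordinates of the ordinary part and of each fixed atom produces the posterior measures $\mu^\prime_j$ and the posterior jumps $J^*_{j,l}$.

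It then remains to recognise these two objects in the hierarchical form asserted by the theorem, which is a pure disintegration-along-$s_0$ computation. For the ordinary part, integrating out $s_1,\ldots,s_g$ gives the $s_0$-marginal intensity $\theta\,\rho_0(s_0)\prod_{j=1}^{g}[\int_{\R_+}(1-\pi_A(s_j))^{n_j}f_J(s_j;s_0,r_j)\,\ud s_j]=\theta\,\rho^\prime_0(s_0)$, while the conditional density of each $s_j$ given $s_0$ is $\propto (1-\pi_A(s_j))^{n_j}f_J(s_j;s_0,r_j)$, whose normaliser is exactly the $j$-th factor of $\rho^\prime_0/\rho_0$, hence it equals $f^\prime_J(s_j;s_0,r_j)$ and the product recomposes to the ordinary intensity; this is precisely the $\mathrm{hCRM}(f^\prime_J,\rho^\prime_0,B_0)$ structure (the scale $\theta$ being inherited unchanged). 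The same disintegration along $s_0$ of the fixed-atom density yields the marginal of $J^*_{0,l}$ proportional to $[\prod_{j=1}^{g}\int_{\R_+}g_{J^*_{j,l}}(s_j;s_0)\,\ud s_j]\,\rho_0(s_0)$, where $g_{J^*_{j,l}}(\cdot;s_0)$ is the unnormalised density on the right-hand side of the $J^*_{j,l}\mid J^*_{0,l}$ display read as a function of $(s_j,s_0)$, and, given $J^*_{0,l}$, the $J^*_{j,l}$'s are conditionally independent with density $\propto g_{J^*_{j,l}}(\cdot;J^*_{0,l})$, which is the claimed formula.

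The main obstacle is the middle paragraph: making the Poisson disintegration rigorous when the number $k_n$ of observed traits is random, and tracking the normalising constants so that the ``ordinary'' and ``fixed'' pieces recombine to a genuine probability law. I would handle this by first conditioning on $k_n$ and on the locations $\{W^*_l\}$, applying the Slivnyak--Mecke / Palm formula to $\tilde N$ on the complement of $\{W^*_1,\ldots,W^*_{k_n}\}$ for the ordinary part and at the $W^*_l$'s for the fixed atoms, and finally checking that the hierarchical analogue of condition \eqref{eq:finite_traits} makes all the displayed integrals finite; the remaining manipulations are exactly those already performed in the proof of Theorem~\ref{teo:post_crm}.
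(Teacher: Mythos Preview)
Your proposal is correct, but there is nothing in the paper to compare it against: the paper does not prove Theorem~\ref{teo:post_hcrm}. It is introduced with ``In the next theorem, we recall the posterior distribution and the predictive distribution of hCRM priors'' and is attributed to \citet{Mas(18)}; accordingly, the appendix contains no proof for it (only Lemma~\ref{lemma:conditional_levy}, \eqref{eq:finite_traits}, Theorem~\ref{marg_tsp}, and Theorem~\ref{teo:post_tsp} are proved there).

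What you have written is a clean, self-contained argument that the paper could have included. The key device---using the marking theorem to lift the hierarchy to a single Poisson random measure $\tilde N$ on $\R_+^{g+1}\times\W$ with intensity $\theta\,\rho_0(s_0)\prod_{j=1}^g f_J(s_j;s_0,r_j)\,\ud s_0\cdots\ud s_g\,B_0(\ud w)$, and then running the same exponential-tilting/Palm disintegration that drives Theorem~\ref{teo:post_crm}---is exactly the natural route, and is essentially how \citet{Mas(18)} proceed as well. Your final paragraph correctly identifies the only genuine technical point (handling the random $k_n$ via Slivnyak--Mecke and checking integrability so that the tilted intensity and the fixed-atom densities are finite and normalisable), and your disintegration along $s_0$ in the third paragraph is the right way to recover the hierarchical form in items i) and ii). One cosmetic remark: in the statement, $g_{J^*_{j,l}}$ is not defined explicitly, and you have correctly inferred it to be the unnormalised density appearing on the right-hand side of the $J^*_{j,l}\mid J^*_{0,l}$ display, viewed as a function of $(s_j,s_0)$.
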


Although not explicitly stated in \cite{Mas(18)}, from \Cref{teo:post_hcrm} it follows that the conditional distribution of a new observation $Z_{j, n_j + 1}$, given $\{Z_{j, i}\}_{j, i}$, coincides with the distribution of
\begin{equation}\label{eq:pred_hcrm}
    Z^\prime_{j, n_j+1} + \sum_{l=1}^k A_{j, n_j+1, l} \delta_{W^*_l},
\end{equation}
where $Z^\prime_{j, n_j+1} \mid \mu_j \sim \mathrm{TrP}(G_A; \mu^\prime_j)$ and the $A_{j, n_j+1, l} \mid J^*_{j, l} \simind G_A(\cdot; J^*_{j, l})$. According to \eqref{eq:pred_hcrm}, since $B_0$ is a non-atomic distribution,  the probability of observing new traits is completely determined by the distribution of $Z^\prime_{j, n_j+1}$. Such a distribution can be seen to depend on the observed sample only through the cardinalities $(n_1, \ldots, n_g)$ regardless of the choice of $\rho_0$ and $f_J$.

In a recent work, \cite{Jam(21)} proposed an alternative class of nonparametric priors for $(\mu_1, \ldots \mu_g)$, still based on a hierarchical construction from CRMs. They assume that
    \begin{align}\label{eq:hcrm_james}
        \mu_1, \ldots, \mu_g \mid \mu_0 &\quad \simiid\quad \mathrm{CRM}(\theta_{j},\rho_j, \mu_0)\\[0.2cm]
        \notag \mu_0 & \quad \sim \quad \mathrm{CRM}(\theta,\rho_0, B_0)
    \end{align}
for $j=1, \ldots, g$. This entails that $\mu_j = \sum_{k \geq 1} J_{j,k} \delta_{\omega_{j, k}}$ and $Z_{j, i} = \sum_{k \geq 1} A_{j,i,k} \delta_{\omega_{j, k}}$ where $\omega_{j, k} \mid \mu_0$ are i.i.d. from $\mu_0 / \mu_0(\mathbb{W})$ and $ A_{j,i,k} \mid \mu_j \simind G_A(J_{j,k})$, $j=1, \ldots, g$. Furthermore, by expanding $\mu_{0}$ as $\mu_0 = \sum_{k \geq 1} J_{0,k} \delta_{W_k}$, and then grouping all atoms that are equal, we can write
\begin{equation}\label{eq:hcrm1}
    \mu_j \stackrel{\text{d}}{=} \sum_{k \geq 1} \left[\sum_{l \geq 1} \tilde J_{j,k, l}\right] \delta_{W_k}, \quad Z_{j,i} \stackrel{\text{d}}{=}  \sum_{k \geq 1} \left[\sum_{l \geq 1} \tilde A_{j,i,k, l}\right] \delta_{W_k},
\end{equation}
where, by the properties of CRMs, $(\tilde J_{j, k, l})_{l \geq 1}$ are, conditionally to $\mu_0$, the jumps of a CRM with L\'evy intensity $J_{0,k} \rho_j(s) \dd s$ and $\tilde A_{j,i,k, l} \mid \tilde J_{j, k, l} \simiid G_A(\cdot; \tilde J_{j, k, l})$. Posterior inferences under \eqref{eq:hcrm_james} are not trivial, since evaluating the distribution of the $Z_{j,i}$'s requires the probability mass function of an infinite convolution of random variables distributed as $G_A$ with different parameters, which is in general non available. Both the marginal distribution and posterior distribution involve complex combinatorial objects whose evaluation is rather complex. Similarly to the case of the hierarchical DP prior \citep{Teh(06)}, posterior inferences can be carried out by introducing auxiliary parameters that help with Markov chain Monte Carlo sampling.

\subsection{Hierarchical T-SP priors}

To avoid the combinatorial hurdles arising from the hierarchical formulation of \cite{Jam(21)}, i.e. from \eqref{eq:hcrm1}, here we consider the hierarchical formulation of \cite{Mas(18)} to define the class of hierarchical T-SP (hT-SP) priors. The construction follows closely that of T-SPs.
\begin{definition}\label{def:htsp}
    Let $\mu_0 = \sum_{k \geq 1} \tilde \tau_{0,k} \delta_{W_k} \sim \mathrm{T \mhyphen SP} (\theta, \rho_0, h, T, B_0)$. Then a $\mathrm{hT \mhyphen SP}$  is the vector of discrete random measures $(\mu_1, \ldots, \mu_g)$ such that, conditionally to $\mu_0$, the $\mu_j$'s are independent and 
    \[
       \mu_j   = \sum_{k \geq 1} \tau_{j,k} \delta_{W_k}
    \]
    where $\tau_{j,k} \mid \mu_0 \simind f_{\tau}(\cdot; \tau_{0, k}, r_j)$, for some density function $f_{\tau}$ on $\R_+$. We write $(\mu_1, \ldots, \mu_g) \sim \mathrm{hT \mhyphen SP}(f_\tau, \theta, \rho_0, h, T, B_0)$.
\end{definition}

Consider $\mu_{0}$ introduced in Definition \ref{def:htsp}. From the definition T-SP priors in Section \ref{sec2}, we have that, conditionally on $\Delta_{1, h}$, $\mu_0$ is a CRM with L\'evy intensity $\theta \rho_0(s_0 \mid \Delta_{1, h}) \dd s_0 B_0(\dd x)$, where 
\begin{equation}\label{eq:cond_levy}
    \rho_0(s_0 \mid \Delta_{1, h}) =  \frac{\rho_0(\Delta_{1, h} T^{-1}(s_0))}{T^\prime(T^{-1})(s_0)} \Delta_{1, h} \indicator_{(T_0, T_1)}(s_0).
\end{equation}
See Lemma \ref{lemma:conditional_levy} in  \Cref{proofs} for the proof of \eqref{eq:cond_levy}. Accordingly, $(\mu_1, \ldots, \mu_g) \mid \Delta_{1, h} \sim \mathrm{hCRM}(f_\tau, \theta \rho_0(s_0 \mid \Delta_{1, h}), B_0)$.
For ease of notation, define $\eta_k(s; s_0, r) = (1 - \pi_A(s))^k f_\tau(s \mid s_0, r)$ and $\gamma_k(s_0, r) = \int_{\R_+}\eta_k(s; s_0, r) \dd s$. The next theorem, which follows from \cite{Mas(18)}, characterizes the posterior and marginal distribution of $(\mu_1, \ldots, \mu_g) \sim  \mathrm{hT \mhyphen SP}(f_\tau, \theta, \rho_0, h, T, B_0)$.

\begin{theorem}\label{teo:post_htsp}
    Let $(Z_{j, i},  j=1, \ldots, g, \, i = 1, \ldots, n_j)$ be a random sample under \eqref{eq:multi_ibp}, with $(\mu_1, \ldots, \mu_g) \sim  \mathrm{hT \mhyphen SP}(f_\tau, \theta, \rho_0, h, T, B_0)$, such that the sample displays $k_n$ traits $\{W^*_1, \ldots, W^*_{k_n}\}$ with frequencies $(m_{j,l}, j=1, \ldots, g, \ l=1, \ldots, k_n)$ and associated index sets $\{\mathcal{B}_{1, 1},\ldots,\mathcal{B}_{g, k_{n}}\}$. Then, the conditional distribution of the $Z_{j, i}$'s given $\Delta_{1,h}$ is
        \begin{align*}
            \Pi(\bm m, \bm a \mid \Delta_{1, h}) &= \theta^{k_n} \exp\left\{-\theta \int_{\R_+}\left(1 - \prod_{j=1}^g \gamma_{n_j}(s_0, r_j)\right) \rho_0(s \mid \Delta_{1, h}) \dd s_0\right\}\\
            &\quad\times\prod_{l=1}^k \left[\int_{T_0}^{T_1} \prod_{j=1}^g \int \pi_A(s_j)^{m_{j, l}} \eta_{n_j - m_{j, l}}(s_j; s_0, r_j)\right.\\
            &\quad\quad\quad\quad\times\left. \prod_{i \in \mathcal{B}_{j, l}} \tilde G_{A^\prime}(\dd a_{j, i, l}; s_j) \dd s_j \rho_0 (s_0\mid  \Delta_{1, h}) \dd s_0\right] B_0(\dd W^*_l).
         \end{align*}
    The conditional distribution of $(\mu_1, \ldots, \mu_g)$ and $\mu_0$ given $(Z_{j,i})_{j, i}$ and $\Delta_{1,h}$ is as in \Cref{teo:post_hcrm} where $\rho_0(s_0)$ is replaced by $\rho_0 (s_0\mid  \Delta_{1, h})$ as in \eqref{eq:cond_levy}. Moreover, the conditional distribution of $\Delta_{1, h}$ given $(Z_{j,i})_{j, i}$ has density function
\begin{displaymath}
        f_{\Delta_{1, h}} (\zeta) \propto \Pi(\bm m, \bm a \mid \zeta) f_{\Delta_{1, h}}(\zeta).
\end{displaymath}
\end{theorem}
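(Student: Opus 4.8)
The plan is to reduce everything to the already-established hierarchical CRM result, Theorem \ref{teo:post_hcrm}, by conditioning on the latent variable $\Delta_{1,h}$ and exploiting the conditional-Lévy representation \eqref{eq:cond_levy}. The starting observation, recorded just before the statement, is that $(\mu_1, \ldots, \mu_g) \mid \Delta_{1,h} \sim \mathrm{hCRM}(f_\tau, \theta \rho_0(s_0 \mid \Delta_{1,h}), B_0)$: indeed, $\mu_0 \mid \Delta_{1,h}$ is a CRM with Lévy intensity $\theta \rho_0(s_0 \mid \Delta_{1,h}) \dd s_0 B_0(\dd x)$ by Lemma \ref{lemma:conditional_levy}, and the second-stage construction $\tau_{j,k} \mid \mu_0 \simind f_\tau(\cdot; \tau_{0,k}, r_j)$ is exactly the hCRM construction. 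Hence, \emph{conditionally on} $\Delta_{1,h} = \zeta$, I can invoke Theorem \ref{teo:post_hcrm} verbatim with $\rho_0$ replaced by $\rho_0(\cdot \mid \zeta)$. This immediately gives the second displayed assertion — the conditional law of $(\mu_1, \ldots, \mu_g, \mu_0)$ given the data and $\Delta_{1,h}$ — with no further work.

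Next I would derive the conditional exchangeable trait probability function $\Pi(\bm m, \bm a \mid \Delta_{1,h})$. This is the marginal (integrated-out) likelihood associated with the hCRM of Theorem \ref{teo:post_hcrm}, evaluated at intensity $\theta \rho_0(\cdot \mid \zeta)$. Concretely, one writes down the joint law of the data under the generalized-IBP sampling scheme \eqref{eq:multi_ibp} and integrates out the jumps of $\mu_0$ and the $\mu_j$'s using the Poisson-process Palm/exponential formula (exactly as in the non-hierarchical Theorem \ref{marg_tsp}): the ``non-observed atoms'' of $\mu_0$ contribute the exponential factor $\exp\{-\theta \int_{\R_+}(1 - \prod_{j=1}^g \gamma_{n_j}(s_0, r_j)) \rho_0(s_0 \mid \zeta)\dd s_0\}$, where $\gamma_{n_j}(s_0, r_j) = \int_{\R_+}(1-\pi_A(s))^{n_j} f_\tau(s \mid s_0, r_j)\dd s$ is the probability that an atom of $\mu_0$ of size $s_0$ generates no observation in group $j$; each of the $k_n$ observed atoms contributes a factor obtained by Palm calculus, namely $\int_{T_0}^{T_1} \prod_{j=1}^g \int \pi_A(s_j)^{m_{j,l}} \eta_{n_j - m_{j,l}}(s_j; s_0, r_j) \prod_{i \in \mathcal B_{j,l}} \tilde G_{A'}(\dd a_{j,i,l}; s_j)\dd s_j\, \rho_0(s_0 \mid \zeta)\dd s_0$, times $\theta\, B_0(\dd W^*_l)$; and the product over $l$ with the $\theta^{k_n}$ prefactor completes the formula. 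The bookkeeping is the same as in the proof of Theorem \ref{marg_tsp} and its hierarchical analogue in \cite{Mas(18)}, just with the tilted conditional intensity $\rho_0(\cdot \mid \zeta)$ in place of $\rho_0$.

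Finally, the conditional density of $\Delta_{1,h}$ given the data follows from Bayes' rule: the joint density of $(\Delta_{1,h}, \{Z_{j,i}\})$ factors as $\Pi(\bm m, \bm a \mid \zeta) f_{\Delta_{1,h}}(\zeta)$, so dividing by the marginal of the data (a constant in $\zeta$) yields $f_{\Delta_{1,h} \mid \{Z_{j,i}\}}(\zeta) \propto \Pi(\bm m, \bm a \mid \zeta) f_{\Delta_{1,h}}(\zeta)$, which is the last claim. The main obstacle is the second step: one must check that the Palm-formula computation that produces the CRM exponential-functional factorization is still valid when the base Lévy intensity $\rho_0$ is replaced by the \emph{random, $\Delta_{1,h}$-dependent} but (conditionally) deterministic intensity $\rho_0(\cdot \mid \zeta)$, and that the support truncation to $(T_0, T_1)$ and the Jacobian $T'(T^{-1})$ introduced by $T$ are correctly propagated through the group-level convolution with $f_\tau$ — this is exactly why the functions $\eta_k$ and $\gamma_k$ are defined over $\R_+$ while the outer integral over $s_0$ runs over $(T_0, T_1)$. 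Once the conditional-on-$\Delta_{1,h}$ reduction to Theorem \ref{teo:post_hcrm} is justified, everything else is routine.
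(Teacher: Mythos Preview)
Your proposal is correct and matches the paper's approach: the paper does not give a separate detailed proof of this theorem but simply observes (in the paragraph preceding the statement) that $(\mu_1,\ldots,\mu_g)\mid\Delta_{1,h}\sim\mathrm{hCRM}(f_\tau,\theta\rho_0(\cdot\mid\Delta_{1,h}),B_0)$ via Lemma~\ref{lemma:conditional_levy}, and then states that the result ``follows from \cite{Mas(18)}'', i.e.\ from Theorem~\ref{teo:post_hcrm} applied with the conditional L\'evy intensity, with the posterior of $\Delta_{1,h}$ obtained by Bayes' rule exactly as in the proof of Theorem~\ref{teo:post_tsp}. Your write-up is in fact more explicit than the paper's, and your cautionary remark about tracking the support $(T_0,T_1)$ and the Jacobian through the definitions of $\eta_k$ and $\gamma_k$ is well taken but causes no difficulty once one works conditionally on $\Delta_{1,h}$.
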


The predictive distribution of $Z_{j, n_{j+1}}$, given the sample, follows by first conditioning on $\Delta_{1, h}$, therefore obtaining \eqref{eq:pred_hcrm}, and then by marginalizing with respect to the posterior distribution of $\Delta_{1, h}$. See Section \ref{sec2} for details. Such a construction leads to a more flexible predictive structure than the hCRM priors of \cite{Mas(18)}, since the posterior distribution of the number of ``new'' traits may depend on the whole sampling information, in analogy to the predictive structure of T-SP prior. We defer to future research the problem of investigating such a novel predictive structure. In particular, we refer to the problem of finding classes of $\rho_0$ and $f_\tau$ for which the posterior distribution of the number of ``new'' traits does not include the whole sampling information, but only the $n_j$'s and the number of distinct traits in each group, similar to what we presented in Section \ref{sec2} for T-SP priors.

\section{Numerical illustrations}\label{sec4}

We focus here on the negative-Binomial trait process with ST-SP prior discussed in Section \ref{sec:nb}, henceforth called NB-ST-SP.
In Section \ref{sec:simulation} we focus on the prediction of newly displayed traits, and compare the NB-ST-SP with a trait process with negative-Binomial score distribution and Gamma process prior (NB-Ga), i.e., $\mu \sim \mbox{CRM}(\theta, s^{-1} e^{-s}, B_0)$, and with the stable-Beta SP (SB-SP) in \cite{Cam(23)} also discussed in Section \ref{sec:sbsp}.
In Section \ref{sec:text} we propose a ``naive-Bayes'' nonparametric model for text classification and compare the use of NB-ST-SP with NB-Ga. In both cases, we adopt an empirical Bayesian approach and estimate the hyperparameters in the prior distributions by maximizing the (log) marginal likelihood of the data. 
To optimize with respect to the parameters, we use the BFGS algorithm in the \texttt{Python} package \texttt{jax}.

\subsection{Simulated data}\label{sec:simulation}

\begin{figure}[h!]
	\centering
    \begin{subfigure}[b]{0.5\textwidth}
         \centering
         	\includegraphics[width=\textwidth]{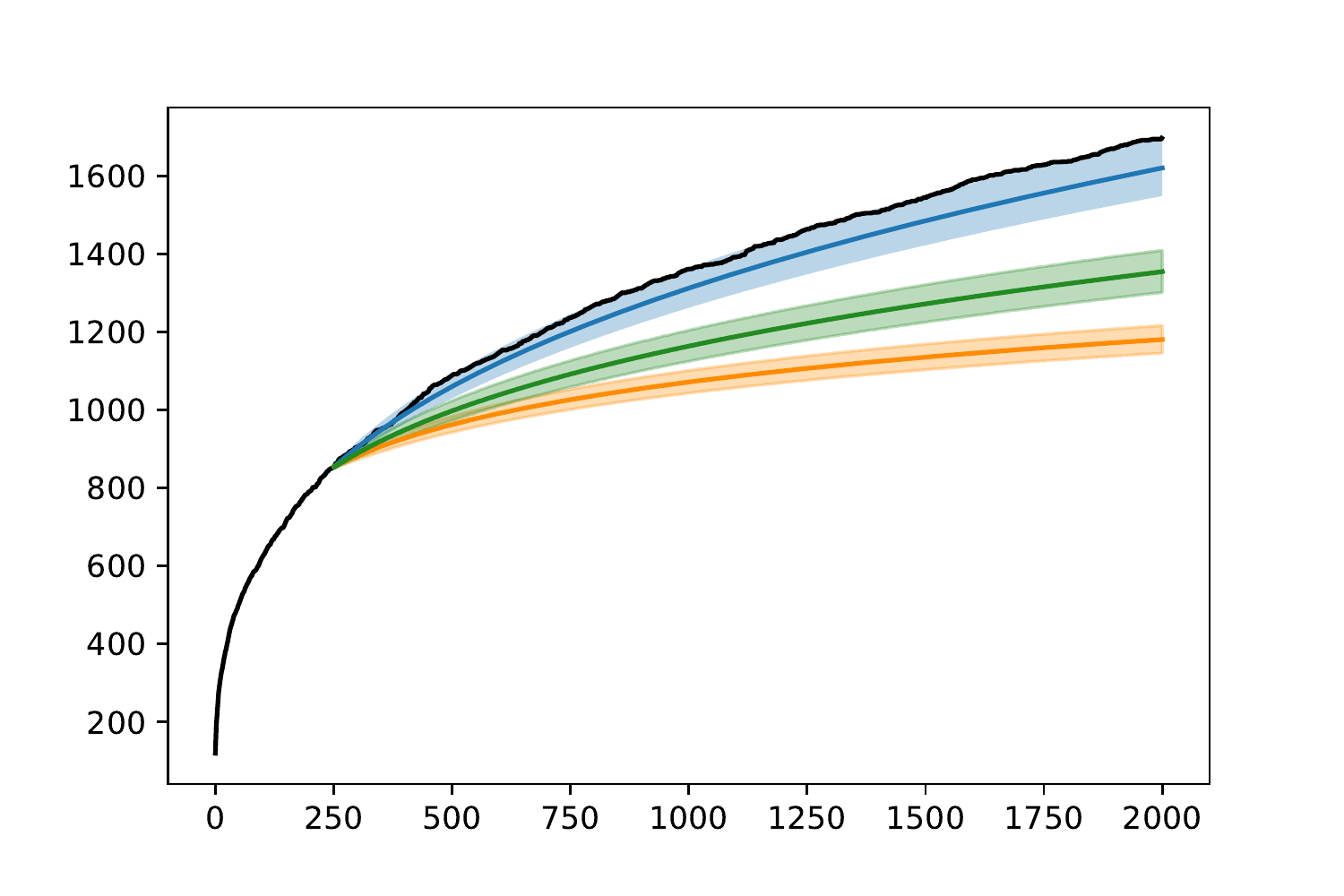}
	\caption{Simulation I}
     \end{subfigure}%
     \begin{subfigure}[b]{0.5\textwidth}
         \centering
         	\includegraphics[width=\textwidth]{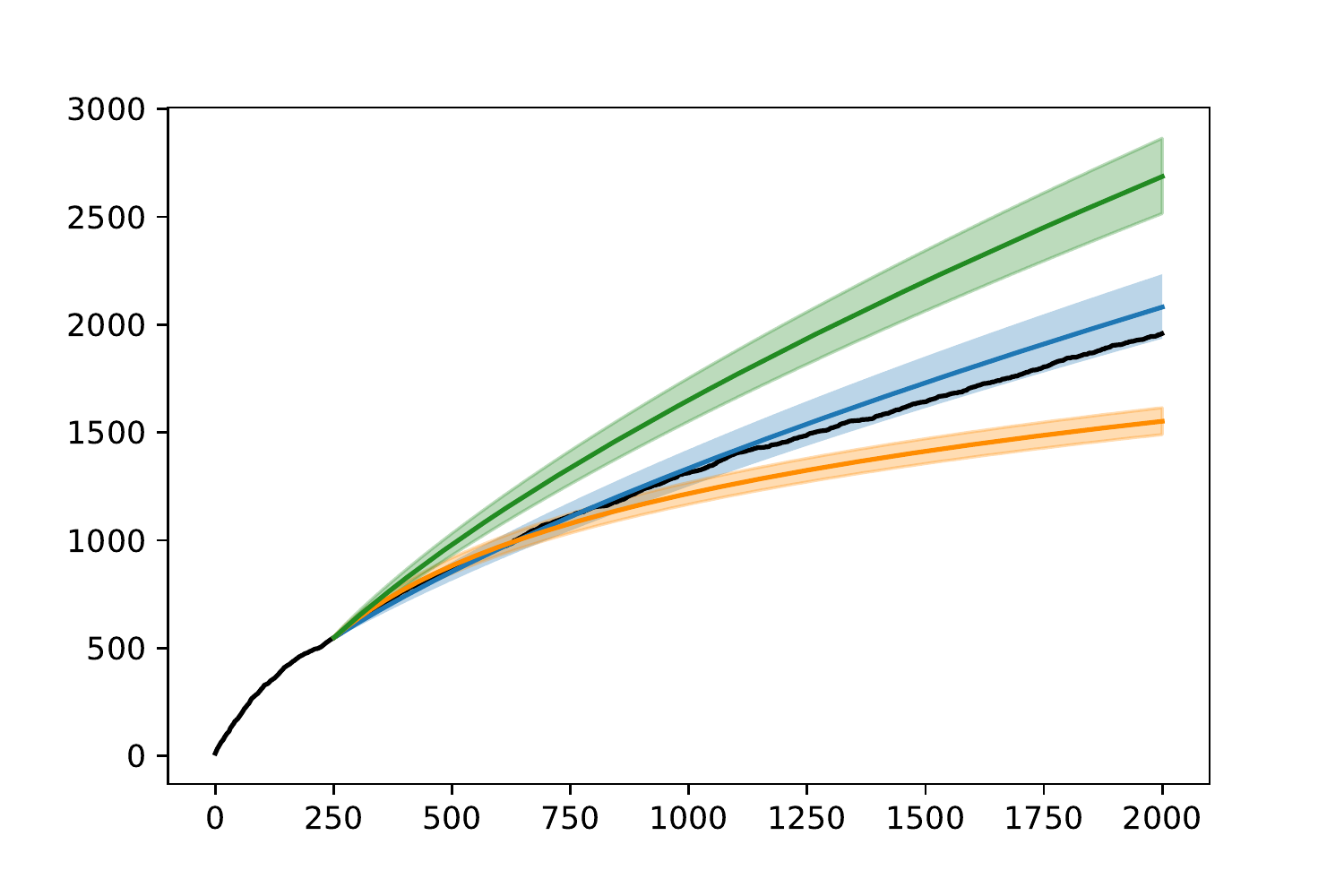}
	\caption{Simulation II}
     \end{subfigure}
    \caption{
        Number of displayed traits (black line) and predicted number of traits under NB-ST-SP (blue line), NB-Ga (orange line), and SB-SP (green line), with 95\% pointwise credible bands (shaded regions).}
    \label{fig:simulation}
\end{figure}

We consider two different data-generating processes. In simulation (I), we simulate data from the NB-ST-SP having fixed $r=10$, $c=60$, and $\alpha =0.3$, using the generative scheme reported in \Cref{sec:nb_restaurant}.
In simulation (II) we consider a sequence $(q_k)_{k \geq 1}$ from the Zipf distribution, i.e. $q_k = (1+k)^{-\xi}$, $\xi=1.5$, and simulate $A_{i, k} \mid q_k$ i.i.d. from a negative binomial distribution with parameters $r=10$ and $1 - q_k$.

We fit the datasets with the NB-ST-SP, NB-Ga, and SB-SP models. 
For the latter model, we preprocess the simulated data to be a binary matrix keeping only track of the presence-absence of traits in the observations.
We keep the parameter $r=10$ fixed and do not estimate it. In this case, the integral $I(r,k)$ in \eqref{eq:i_rk} can be expressed as a sum of Beta functions as shown in \Cref{sec:app_irk}. Estimation of $r$ requires a numerical approximation to $I(r,k)$ which we discuss in Section \ref{sec:text}.

For all simulations, we generate $N=2000$ observations and use the first $n=250$ to estimate the hyperparameters in the models.
Then, we focus on the prediction of $U^{m}_n$, that is the number of new traits displayed in an additional sample of size $m=1, \ldots, 1750$. See  \Cref{sec:app_marg_nb_gamma} for the marginal distribution of $Z_{1:n}$ and the distribution of $U_{n}^{m}$ under the NB-Ga process.
Figure \ref{fig:simulation} summarizes the posterior findings. 
In both settings, NB-ST-SP correctly estimates the distribution of $U_{n}^{m}$. In Simulation I, both NB-Ga and SB-SP underestimate the number of new traits, with SB-SP having greater predictive performance than NB-Ga.
In Simulation II, NB-Ga underestimates the number of new traits, while SB-SP overestimates it, with NB-Ga having a greater predictive performance than SB-SP.

\subsection{Nonparametric ``naive Bayes'' text classification}\label{sec:text}

Consider $g$ corpora of documents $Z_{j, i}, \ldots, Z_{j, n_j}$. For instance, $g=3$ and the $Z_{j, i}$'s are movies reviews subdivided into ``good'', ``neutral'', and ``bad'', the $Z_{j,i}$'s are newspaper articles and which are subdivided by topic (e.g., foreign politics, technology, news stories, and so on).
As common for text classification, we make the ``bag of words'' assumption, i.e., the order in which words appear in a document is irrelevant for the purpose of topic detection. 
Therefore, we can represent each document as a collection of words (i.e., traits) and counts of how many times each word appears in the document (i.e., levels of association). Introducing a corpus-specific random measure $\mu_j$, we have that a suitable model under the bag of words assumption is 
\[
    Z_{j, i} \mid \mu_j \quad \simiid \quad \mathrm{TrP}(G_A, \mu_j), \qquad i=1, \ldots, n_j.
\]
The model is then completed by choosing a suitable distribution $G_A$ and a prior for $(\mu_1, \ldots, \mu_g)$ discussed below.
To classify a new document $Z^*$ into one of the topics $1, \ldots, g$ we follow \cite{Zho(16)} and consider the following ``naive Bayes'' rule:
\[
    \mathrm{Pr}(Z^* \text{ belongs to $j$-th class}) \propto \mathrm{Pr}(Z^* \mid Z_{j, i}, \ldots, Z_{j, n_j}), \quad j=1, \ldots, g
\] 
where the term on the right-hand side is the predictive distribution of $Z^*$ given $Z_{j, i}, \ldots, Z_{j, n_j}$.
Normalizing this vector of size $g$ gives the class assignment probabilities, while taking the argmax gives a point estimate for the classification of document $Z^*$.

We assume independence across different corpora, and model them separately via either the NB-ST-SP or NB-Ga processes. 
As in Section \ref{sec:simulation} we we adopt an empirical Bayesian approach and estimate the hyperparameters by maximizing the (log) marginal likelihood of each corpus.
For NB-ST-SP, this requires the computation of the integrals $I(r, k)$ that we approximate numerically using Gauss-Laguerre quadrature \citep[see, e.g.,][]{Hil(87)} with weight function $e^{-s}$.
Moreover, we can easily compute the predictive probabilities $\mathrm{Pr}(Z^* \mid Z_{j, i}, \ldots, Z_{j, n_j})$ by considering the ratio of the marginal distribution of $(Z_{j, i}, \ldots, Z_{j, n_j}, Z^*)$ and the marginal distribution of $(Z_{j, i}, \ldots, Z_{j, n_j})$, as in \eqref{eq:marg_nb}.
Note that a more principled approach would be to assume a joint distribution for $(\mu_1, \ldots, \mu_g)$, for instance by considering the \mbox{hTSP} process discussed in Section \ref{sec3}.
However, fitting this hierarchical model would require the use of Markov chain Monte Carlo algorithms with non-trivial updates.

We apply our Naive Bayes classifier to the 20-Newsgroup dataset, from which we select seven categories (i.e., classes) of documents for a total of $6,470$ texts of which $3,884$ used for training. 
We exclude stopwords as defined in the default list in the Python package \texttt{scikit-learn} as well as words appearing in less than 3 documents of the training set.
The final vocabulary of the training set consists of $14,369$ unique words.
In particular, we note that, in the training set, the number of distinct words in each class ranges between $6,000$ to $8,800$.
Moreover, the test set contains numerous words not present in the training set.
Specifically, the number of new words in in the $j$-th category (i.e., the number of words that are not present in the $j$-th class for the training set, but are present in the test set) ranges between $850$ and $3,000$.
Hence, once a new document $Z^*$ needs to be classified, it is likely that it will contain several words that have not appeared in $Z_{j,1}, \ldots, Z_{j, n_j}$ for each $j = 1, \ldots, g$.
Therefore, we expect that the increased flexibility of our model yields a superior predictive performance on the test set.

Figures \ref{fig:prediction} and \ref{fig:pred_gamma} shows the classification of documents of the training and test set unader NB-ST-SP and NB-Ga, respectively. The $(i, j)$-th entry of each matrix corresponds to the probability that the $i$-th document belongs to the $j$-th class.
We compute a point prediction by taking the argmax over these probabilities. Using this rule, our classifier achieves $99\%$ accuracy on the training set and $88\%$ accuracy on the test set, while the one using the gamma process prior achieves almost a perfect accuracy on the training set but shows a poorer performance on the test set, where accuracy is around $83\%$.

\begin{figure}[ht]
    \centering
    \includegraphics[width=\linewidth]{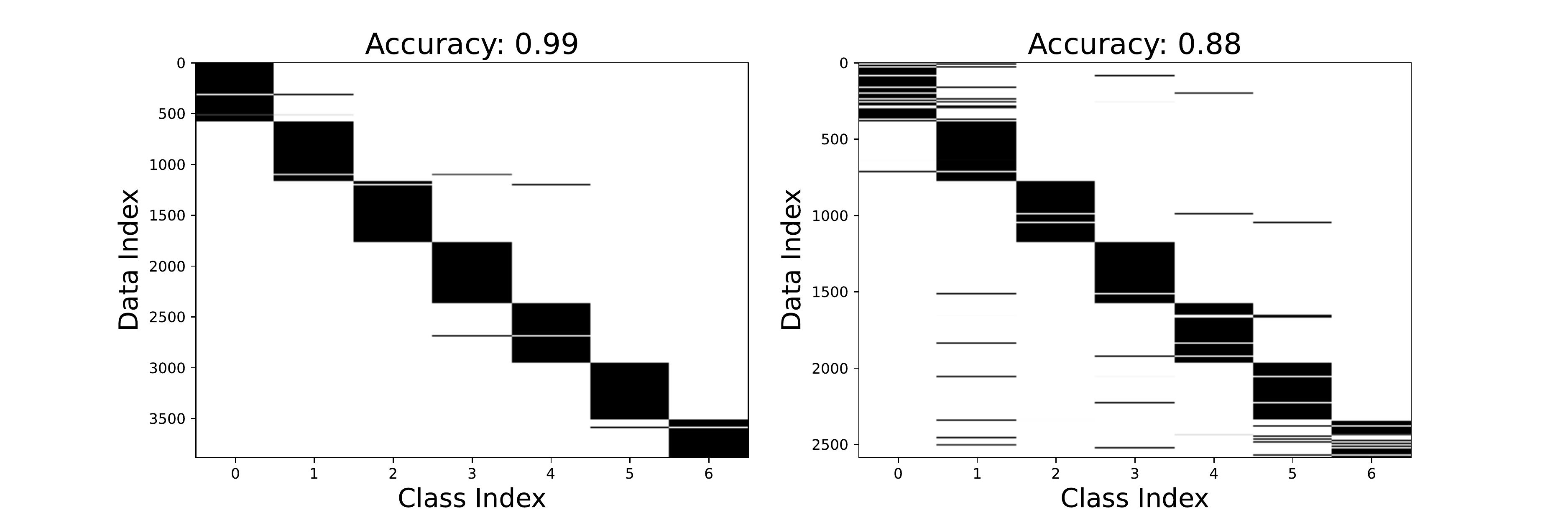}
    \caption{Class assignment probabilities for each document (one for each row) in the training (left) and test (right) set, for our model.  Documents are sorted by their corpus, so that perfect classification corresponds to block-diagonal like matrices.}
    \label{fig:prediction}
\end{figure}

\begin{figure}[ht]
    \centering
    \includegraphics[width=\linewidth]{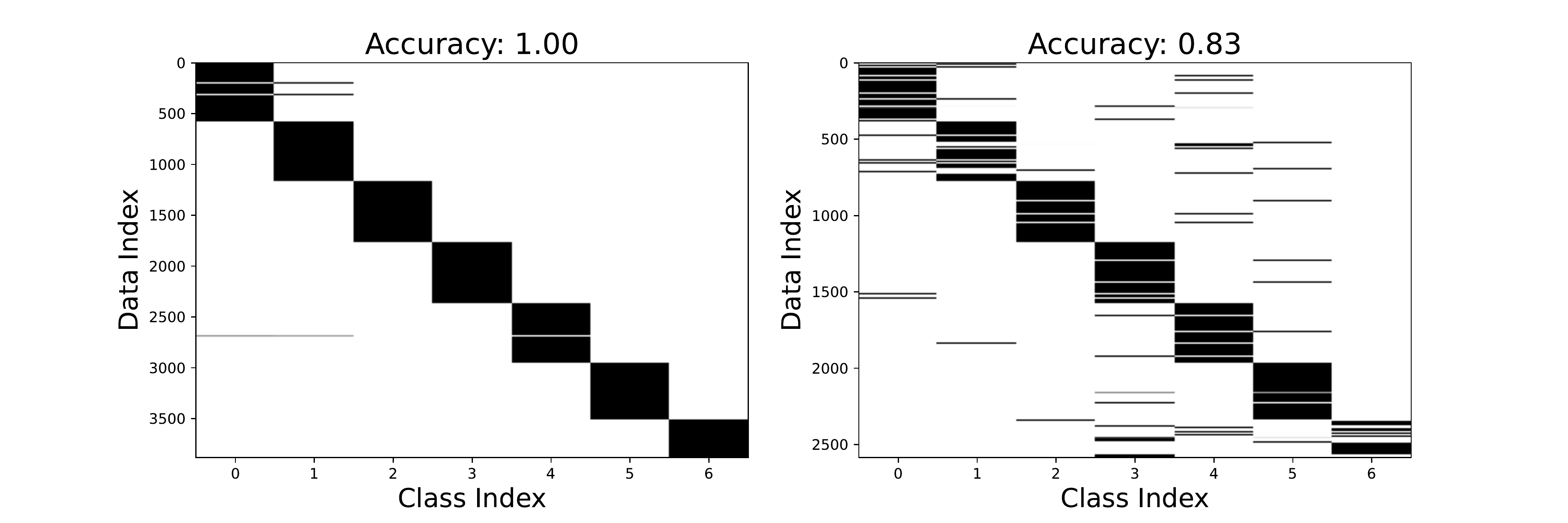}
    \caption{Class assignment probabilities for each document (one for each row) in the training (left) and test (right) set, for the gamma prior model. Documents are sorted by their corpus, so that perfect classification corresponds to block-diagonal like matrices.}
    \label{fig:pred_gamma}
\end{figure}

\section{Discussion}\label{sec5}

We introduced and investigated the class of T-SP priors for BNP analysis of trait allocations, showing that they are characterized by a richer predictive structure than CRM priors, while maintaining the same analytic tractability and ease of interpretation as CRM priors. Then, as an extension of T-SP priors to trait allocations with multiple groups of data or subpopulations, we presented the class of hierarchical T-SP priors, and developed their posterior analysis. The effectiveness of T-SP priors is showcased through an empirical analysis on synthetic and real data. In particular, in an extrapolation task for the number of newly displayed traits in an additional sample, we found that trait models based on T-SP priors outperform models based on CRMs. Moreover, we also show how trait models provide better predictive performance than feature allocation models, which disregard the levels of association and consider only the presence or absence of the traits. In a text classification task on a real dataset with more than six thousand documents and 14 thousand unique words, we show that modeling the word-document counts using TSPs results in a better predictive performance than assuming a CRM prior, hence demonstrating the practical need of models with more flexible predictive structure in real world data analysis problems.

SP priors was introduced in \citet{Jam(15)} as a generalization of the Beta process prior for BNP analysis of feature allocations, and, most recently, they have been applied by \citet{Cam(23)} in BNP inference for the unseen-feature problem, first showing their great potential in enriching the predictive structure of CRM priors. By introducing the class of T-SP priors, our work somehow completes the work of \citet{Jam(15)}, making available a comprehensive framework for BNP analysis of feature and trait allocations, which relies on a broad class of priors that enrich the predictive distribution of CRM priors, while maintaining analytical tractability and ease of interpretation. CRM priors, and in particular the Beta and the Gamma process priors, have been widely used in BNP analysis of feature and trait allocations, with a broad range of applications, e.g., network analysis, analysis of differential gene expression and high-throughput sequencing data, and topic modeling to cite a few. In all these contexts, we believe that T-SP priors may be more effective than CRM priors, as they allow to better exploit the sampling information in posterior inferences.

\FloatBarrier

\appendix

\section{Proofs of the main results}\label{proofs}

We start by a fundamental lemma that will serve as the base in the proofs of all our general results.
\begin{lemma}\label{lemma:conditional_levy}
Let $\mu \sim \tsp(\theta \rho B_0, h, T)$. Then, conditionally to $\Delta_{1, h}$, $\mu$ is a completely random measure with L\'evy intensity
\begin{equation}\label{eq:conditional_levy}
    \frac{\rho(\Delta_{1, h} T^{-1}(s))}{T^\prime(T^{-1})(s)} \Delta_{1, h} \indicator_{(T_0, T_1)}(s) \dd s \theta B_0(\dd x)
\end{equation}
\end{lemma}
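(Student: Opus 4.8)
The plan is to unwind Definition~\ref{tsp} through its three constituent operations — ordering the jumps of the base CRM, rescaling by the leading jump, and applying the diffeomorphism $T$ — using at each stage the restriction and mapping properties of Poisson random measures.

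First I would record the standard description of the leading jump of a homogeneous CRM. If $\mu \sim \mathrm{CRM}(\theta,\rho,B_0)$ is driven by the Poisson random measure $N=\sum_{k\geq1}\delta_{(\tau_k,W_k)}$ of intensity $\theta\rho(x)\,\dd x\,B_0(\dd w)$, then $\psi(u)=\int_{\R_+}(1-e^{-us})\rho(s)\,\dd s<+\infty$ forces $\int_\zeta^{+\infty}\rho(s)\,\dd s<+\infty$ for every $\zeta>0$ (bound $1-e^{-us}\geq1-e^{-u\zeta}$ on $[\zeta,+\infty)$), so only finitely many jumps exceed any level, the leading jump $\Delta_1$ is a.s.\ finite and well defined, and it has density $f_{\Delta_1}(\zeta)=\exp\{\theta\int_\zeta^{+\infty}\rho(s)\,\dd s\}\,\theta\rho(\zeta)$, as recalled before Definition~\ref{tsp}. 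Decomposing $\R_+$ into $\{\zeta\}$, $(\zeta,+\infty)$ and $(0,\zeta)$ and using independence of $N$ over disjoint sets, conditionally on $\Delta_1=\zeta$ the remaining jumps $\Delta_2>\Delta_3>\cdots$ are exactly the points of a Poisson process on $(0,\zeta)$ with intensity $\theta\rho(s)\,\dd s$; the accumulation of infinitely many of them near $0$ is harmless since $\rho$ is integrable away from $0$ and $\sum_k\tau_k<+\infty$ a.s. For a fully rigorous conditioning argument one restricts $N$ to $(\varepsilon,+\infty)\times\mathbb{W}$, applies the elementary conditioning there, and lets $\varepsilon\downarrow0$; alternatively one cites \citet{Kin(93)} and \citet{Jam(15)}.

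Next I would transport this law forward twice. Since $T:(0,1)\to(T_0,T_1)$ is a diffeomorphism of intervals it is strictly monotone, so $T^\prime$ keeps a constant sign and $|(T^{-1})^\prime(s)|=1/|T^\prime(T^{-1}(s))|$. By the mapping theorem for Poisson processes, conditionally on $\Delta_1=\zeta$ the rescaled jumps $(\Delta_{k+1}/\Delta_1)_{k\geq1}$ form a Poisson process on $(0,1)$ with intensity $\theta\,\zeta\,\rho(\zeta u)\,\dd u$ (change of variable $s=\zeta u$), and then $\big(T(\Delta_{k+1}/\Delta_1)\big)_{k\geq1}$ form a Poisson process on $(T_0,T_1)$ with intensity $\theta\,\zeta\,\rho(\zeta T^{-1}(s))\,/\,T^\prime(T^{-1}(s))\,\indicator_{(T_0,T_1)}(s)\,\dd s$. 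This is precisely the conditional law $\mathcal L_\zeta$ entering Definition~\ref{tsp}, and its intensity is $\theta\,\rho_0(s,\zeta)$ in the notation of \eqref{eq:rho_k}.

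Finally I would reassemble $\mu$. By Definition~\ref{tsp}, conditionally on $\Delta_{1,h}=\zeta$ the jumps $(\tau_k)_{k\geq1}$ of $\mu$ are distributed as $\mathcal L_\zeta$ while the locations $(W_k)_{k\geq1}$ are i.i.d.\ $B_0$ and independent of the jumps; attaching i.i.d.\ $B_0$-marks to the points of a Poisson process yields again a Poisson process with product intensity, so $\sum_{k\geq1}\delta_{(\tau_k,W_k)}$ is, conditionally on $\Delta_{1,h}=\zeta$, Poisson on $\R_+\times\mathbb{W}$ with intensity $\theta\,\zeta\,\rho(\zeta T^{-1}(s))\,/\,T^\prime(T^{-1}(s))\,\indicator_{(T_0,T_1)}(s)\,\dd s\,B_0(\dd x)$. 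Substituting $\zeta=\Delta_{1,h}$ shows $\mu$ is, conditionally on $\Delta_{1,h}$, a completely random measure with the L\'evy intensity in \eqref{eq:conditional_levy}. The one delicate point is the conditioning on the leading jump in the first step; once that is in place the rest is a routine application of the mapping theorem and the marking theorem for Poisson processes.
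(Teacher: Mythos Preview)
Your proposal is correct and follows essentially the same route as the paper: both arguments start from the fact (cited from \citet{Jam(15)}) that, conditionally on $\Delta_1$, the remaining jumps form a Poisson process on $(0,\Delta_1)$ with intensity $\theta\rho(s)\,\dd s$, and then push this law forward through $s\mapsto s/\Delta_1$ and $u\mapsto T(u)$. The only difference is packaging: the paper identifies the resulting CRM by computing its Laplace functional and performing the two changes of variable inside the integral, whereas you invoke the mapping and marking theorems for Poisson processes directly; these are equivalent standard arguments, and your version is slightly more careful in flagging the monotonicity of $T$ and the absolute value of the Jacobian.
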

\begin{proof}
Consider the random measure $\mu_{\Delta_1}$ defined as
\[
    \mu_{\Delta_1} = \sum_{k \geq 1} T\left(\frac{\Delta_{k+1}}{\Delta_1}\right) \delta_{w_{k+1}}
\]
where $\Delta_1 > \Delta_2 > \cdots$ are the ordered jump of a completely random measure and the $w_{k}$'s its support points. Recall that a $\tsp$ is obtained by a change of measure of $\Delta_1$.
Therefore, it suffices to show that $\mu_{\Delta_1}$ given $\Delta_1$ is a completely random measure with intensity
\[
    \frac{\rho(\Delta_{1} T^{-1}(s))}{T^\prime(T^{-1})(s)} \Delta_{1} \indicator_{(T_0, T_1)}(s) \dd s \theta B_0(\dd x).
\]

As shown in \cite{Jam(15)}, $\Delta_2, \Delta_3, \ldots$, conditionally to $\Delta_1$ are the points of a Poisson process with intensity $\rho(s) \indicator_{(0, \Delta_1)}(s) $.
Consider now the Laplace transform of $\mu_{\Delta 1}$, for measurable $f: \W \rightarrow \R_+$
\begin{align*}
    \E\left[e^{- \int_\W f(x) \mu_{\Delta_1}(\dd x)} \mid \Delta_1 \right] &= \E \left[\exp \left\{ - \sum_{k \geq 1} f(w_k) T\left(\frac{\Delta_{k+1}}{\Delta_1} \right) \right\} \mid \Delta_1 \right] \\
    &= \exp\left\{ - \int_\W \int_{\R_+} (1 - e^{-f(w) T(s \Delta^{-1})}) \rho(s) \indicator_{(0, \Delta_1)}(s) \dd s \theta B_0(\dd x) \right\}
\end{align*}
where the second equality follows from the representation of the Laplace transform for a marked Poisson point process with points $(\Delta_{k+1}, w_{k+1})_{k \geq 1}$. The change of variables $y=s \Delta^{-1}$ and $s = T(y)$ yields
\begin{align*}
    \E\left[e^{- \int_\W f(x) \mu_{\Delta_1}(\dd x)} \mid \Delta_1 \right] &= \exp\left\{ - \int_\W \int_0^1 (1 - e^{-f(w) T(y)}) \rho(y \Delta_1) \Delta_1 \dd y \theta B_0(\dd x) \right\} \\
    &= \exp\left\{ - \int_\W \int_{T_0}^{T_1} (1 - e^{-f(w) s}) \rho(T^{-1}(s) \Delta_1) \Delta_1 \frac{1}{T^\prime(T^{-1}(s))}\dd s \theta B_0(\dd x) \right\}
\end{align*}
which concludes the proof. 
\end{proof}

\subsection{Proof of \eqref{eq:finite_traits}}\label{app:proof_finite_traits}

To prove that this is a sufficient condition, we observe that the non-negativeness of the $B_k$'s entails that it is sufficient to ask that the expected value of their sum is finite. Hence,
\begin{align*}
    \E\left[ \sum_{k \geq 1} B_{k} \right] &= \E_{\Delta_{1, h}}\left[ \E \left[ \sum_{k \geq 1} \pi_A(\tilde \rho_k) \mid \Delta_{1, h} \right] \right] \\
    &= \E_{\Delta_{1, h}}\left[ \int_{\W} \int_{T_0}^{T_1} \pi_A(s) \frac{\rho(\Delta_{1, h} T^{-1}(s))}{T^\prime(T^{-1})(s)} \Delta_{1, h} \dd s \theta B_0(\dd x) \right]
\end{align*}
where the second equality follows from Campbell's theorem and an application of Lemma \ref{lemma:conditional_levy}. Then the proof follows by integrating with respect to $B_0(\dd x)$.

\subsection{Proof of Theorem \ref{marg_tsp}}\label{app:proof_marg}

Conditionally to $\Delta_{1, h}$, the random measure $\mu$ is a CRM. Hence, by Proposition 3.1 in \cite{Jam(17)} the distribution of $(Z_1, \ldots, Z_n) \mid \Delta_{1,h}$ equals 
\[
    \theta^{k_{n}}\exp \left\{ -\theta \sum_{i=1}^n \varphi_{i}(\Delta_{1, h}) \right\} \prod_{l=1}^{k_{n}} \int_{T_0}^{T_1} \rho_{n - m_l}(s, \Delta_{1, h}) \pi_A(s)^{m_l} \prod_{i \in \mathcal{B}_l} \left[\tilde G_{A^\prime}(\dd a_{i, k}; s) \right] \dd s,
\]
where, by virtue of \Cref{lemma:conditional_levy}
\[
    \varphi_k(\zeta) =  \int_{T_0}^{T_1} \pi_A(s) \rho_{k-1}(s, \zeta) \dd s, \quad  \rho_k(s, \zeta) = \left(1 - \pi_A(s) \right)^{k} \frac{\zeta \rho(\zeta T^{-1}(s))}{T^\prime(T^{-1}(s))} I_{(T_0, T_1)}(s).
\]
Then consider the summation of the $\varphi_i$'s. An application of Fubini's theorem yields
\begin{align*}
    \sum_{i=1}^n \varphi_{i}(\Delta_{1, h}) &=  \int_{T_0}^{T_1} \pi_A(s) \sum_{i=1}^n (1 - \pi_A(s))^{i-1} \frac{\zeta \rho(\zeta T^{-1}(s))}{T^\prime(T^{-1}(s))}\\
    &= \int_{T_0}^{T_1} \pi_A(s) \frac{1 - (1 - \pi_A(s))^n}{1 - (1 - \pi_A(s))} \frac{\zeta \rho(\zeta T^{-1}(s))}{T^\prime(T^{-1}(s))} = \psi_n(\zeta).
\end{align*}

\subsection{Proof of Theorem \ref{teo:post_tsp}}\label{app:proof_post}

The posterior of $\Delta_{1, h}$ follows by noticing that the marginal distribution of $Z_{1:n}$ is 
\[
    \int_0^\infty e^{- \theta \psi_n(\zeta)} \prod_{l=1}^k \int_{T_0}^{T_1} \rho_{n - m_l}(s, \zeta) \pi_A(s)^{m_l} \prod_{i \in \mathcal{B}_l} \left[\tilde G_{A^\prime}
(\dd a_{i, k} \mid s) \right] \dd s  \theta^k f_{\Delta_{1, h}}(\zeta) \dd \zeta.
\]
An application of Bayes' theorem yields \eqref{eq:post_delta}.

The posterior distribution of $\mu$ follows by first conditioning to $\Delta_{1, h}$, and applying \Cref{teo:post_crm} to get the distribution of $\mu \mid Z_1, \ldots, Z_n, \Delta_{1,h}$, and then marginalizing with respect to \eqref{eq:post_delta}. The same strategy can be used to derive the predictive distribution of $Z_{n+1}$.

\section{Details about the examples}\label{app:example_details}

\subsection{Calculations for the Poisson case}\label{app:ex_pois}

To obtain the distribution of $\Delta_{1, h}^{-\alpha}$ given $Z_1, \ldots, Z_n$, from the definition of $F(n, q, r, \alpha)$ and \eqref{eq:post_delta} we get
\begin{multline*}
    f_{\Delta_{1, h}^{-\alpha} \mid Z}(\zeta) \propto\frac{(\alpha \theta \zeta^{-\alpha})^k}{\prod_{l=1}^k \prod_{i \in B_l} a_{i, l}!} \exp\left\{- \theta \alpha \zeta^{-\alpha} I(r, n)\right\} \\ \times \prod_{l=1}^k F(n, q_l, r \alpha) \theta^{c}{\Gamma(c)} (\zeta^{-\alpha})^{c-1} e^{- \theta \zeta^{- \alpha}} \dd \zeta^{-\alpha}
\end{multline*}
where we recognize the kernel of a Gamma distribution with parameters $c+k_n$ and $\theta(1 + \alpha I(r, n))$.

As far as the marginal distribution of $Z_{1:n}$ is concerned, from Theorem \ref{marg_tsp}, conditionally to $\Delta_{1, h} = \zeta$, the marginal distribution of $Z_{1:n}$ equals
\begin{multline*}
        \theta^k \exp\left\{-  \theta \alpha \zeta^{-\alpha} I(r, n)\right\} \\ 
        \times \prod_{l=1}^k \int_{\R_+} e^{-rs(n - m_l)} (1 - e^{-rs})^{m_l} e^{-s} \alpha \zeta^{-\alpha}(1 - e^{-s})^{- 1- \alpha} \prod_{i \in B_l} \frac{e^{-rs} rs^{a_{il}}}{a_{il}! (1-e^{-rs})}
\end{multline*}
Letting $q_l := \sum_{i \in B_l} a_{i,l}$ and recalling the definition of $F(n, q_l, r, \alpha)$ we get that the expression above reduces to
\[
    \frac{(\alpha \theta \zeta^{-\alpha})^k}{\prod_{l=1}^k \prod_{i \in B_l} a_{i, l}!} \exp\left\{-  \theta \alpha \zeta^{-\alpha} I(r, n)\right\} \prod_{l=1}^k F(n, q_l, r \alpha).
\]
Integrating with respect to $\Delta_{1, h}^{-\alpha} \sim \mbox{Gamma}(c, \theta)$ leads to the marginal of $Z_1, \ldots, Z_n$ being equal to
\begin{align*}
   & \int_{\R_+} \frac{(\alpha \theta \zeta^{-\alpha})^k}{\prod_{l=1}^k \prod_{i \in B_l} a_{i, l}!} \exp\left\{-  \theta \alpha \zeta^{-\alpha} I(r, n)\right\} \prod_{l=1}^k F(n, q_l, r \alpha) \frac{\theta^{c}}{\Gamma(c)} (\zeta^{-\alpha})^{c-1} e^{- \theta \zeta^{- \alpha}} \dd \zeta^{-\alpha} \\
   & \qquad \qquad = \frac{(\alpha \theta)^k \theta^c \prod_{l=1}^k F(n, q_l, r \alpha)}{\Gamma(c) \prod_{l=1}^k \prod_{i \in B_l} a_{i, l}!} \int (\zeta^{-\alpha})^{k + c - 1} \exp\left\{- \theta \alpha \zeta^{-\alpha}  I(r, n) + \theta \right\} \dd \zeta^{-\alpha}
\end{align*}
and the result follows by recognizing the kernel of a $\mbox{Gamma}(k + c, \theta(1 + \alpha \sum I(r, i)))$ in the integral.

\subsection{Calculations for the Negative Binomial case}

We proceed along the same lines of \Cref{app:ex_pois}, substituting $\tilde G_{A^\prime}=\binom{a + r - 1}{a}(1 - e^{-s})^a e^{-sr} (1 - e^{-sr})^{-1}$.
Then conditionally to $\Delta_{1, h} = \zeta$ one obtains that the marginal distribution of the sample is
    \begin{multline*}
        (\alpha \theta \zeta^{-\alpha})^k \prod_{l=1}^k \prod_{i \in B_l} \binom{a_{i, l} + r - 1}{a_{i, l}} \exp\left\{- \theta \alpha \zeta^{-\alpha} I(r, n)\right\} \times \\
        \times \prod_{l=1}^k \int_{\R_+} (1 - e^{-s})^{-1-\alpha + q_l} e^{- s r n} e^{-s} \dd s
    \end{multline*}
    where the integrals can be evaluated using the change of variable $y= e^{-s}$, leading to
    \[
        \int_{\R_+} (1 - e^{-s})^{-1-\alpha + q_l} e^{- s r n} e^{-s} \dd s = \mbox{Beta}(rn - 1, -\alpha + q_l).
    \]
    Finally, integrating with respect to $\Delta_{1, h}^{- \alpha} \sim \mbox{Gamma}(c, \theta)$ yields the marginal distribution.
    
The conditional distribution of $\Delta_{1, h}^{-\alpha}$ follows exactly as in \Cref{app:ex_pois}.

Consider now the density of $J^*_l$ \eqref{eq:post_jumps}. Standard computations lead to
\begin{align*}
    f_{J^*_l}(s) = \frac{1}{B(rn - 1, q_l - \alpha)} (1 - e^{-s})^{q_l - \alpha - 1} e^{-srn} e^{-s},
\end{align*} 
we note in particular that this density does not depend on $\Delta_{1,h}$.
This leads to the predictive distribution for $A_{n+1, l}$, $l=1, \ldots, k_n$, that is supported on $\{0, 1, 2, \ldots\}$ such that
\begin{equation}\label{eq:nb_new_trait}
    \mathrm{Pr}(A_{n+1, l} = k \mid Z_1, \ldots, Z_n) = \binom{k + r - 1}{k} \frac{B\left(r(n+1) - 1, q_l + k - \alpha\right)}{B(rn - 1, q_l - \alpha)}
\end{equation} 
which does not belong to a known parametric family.

\subsection{Gaussian Spike-and-Slab case}

The examples in Section \ref{sec2} were based on $G_A$ being a p.m.f. over $\{0, 1, \ldots\}$. 
Now, we consider an alternative construction where we specify directly $\pi_A$ and $\tilde G_{A^\prime}$ directly as a Gaussian spike and slab as in Section 4.4 of \cite{Jam(17)}. That is
\[
    \pi_A(s) = 1 - e^{-s}, \qquad \tilde G_{A^\prime}(y; s, \eta_k) = \frac{\sqrt s}{\sqrt{2 \pi}} e^{-\frac{s}{2} (y - \eta_k)^2 }
\]
where the atom-specific parameters $\eta_k$ are assumed fixed.
A priori, we assume $\mu \sim \tsp \left(\alpha s^{-1-\alpha} B_0(\dd x), h_{c}, -\log(1 - s) \right)$ as in the Poisson case.

Since $\pi_A(s)$ is identical to the expression in the Poisson case with $r=1$, we have that $\rho_k$ and $\phi_k$ have the same expression here.
Hence, conditionally to $\Delta_{1,h} = \zeta$, the marginal distribution of $Z_1, \ldots, Z_n$ equals
\begin{align*}
    & e^{- \theta \alpha \zeta^{-\alpha}  I(1, n)} \prod_{l=1}^k \int_{\R_+} \theta \alpha \zeta^{-\alpha} e^{-s (n-m_l)} (1 - e^{-s})^{m_l} e^{-s} (1 - e^{-s})^{-1-\alpha} \prod_{i \in \mathcal{B}_l} \frac{\sqrt s}{\sqrt{2 \pi}} e^{-\frac{s}{2} (y_{i, l} - \eta_l)^2 }  \dd s\\
    & \qquad =  e^{- \theta \alpha \zeta^{-\alpha}  I(1, n)} \theta^k \alpha^k (\zeta^{-\alpha})^k \prod_{l=1}^k (2 \pi)^{m_l / 2} \\
    & \hspace{4cm} \int_{\R_+} e^{- s (n - m_l)} e^{-s} (1 - e^{-s})^{-1-\alpha + m_l} s^{m_l / 2} e^{- \frac{s}{2} \sum_{i \in \mathcal{B}_l} (y_{i, l} - \eta_l)^2} \dd s
\end{align*}
where the above integral does not possess an analytic expression. Let
\[
    G(n, \mathcal A_{l}, \eta_l, \alpha) = (2 \pi)^{m_l / 2} \int_{\R_+} e^{- s (n - m_l)} e^{-s} (1 - e^{-s})^{-1-\alpha + m_l} s^{m_l / 2} e^{- \frac{s}{2} \sum_{y \in A_{l}} (y - \eta_l)^2} \dd s
\]
(note that $|\mathcal A_{l}| = m_l$).
Then we have that the marginal distribution of $Z_1, \ldots, Z_n$ conditionally to $\Delta_{1,h} = \zeta$ equals
\[
    e^{- \theta \alpha \zeta^{-\alpha} I(1, i)} \theta^k \alpha^k (\zeta^{-\alpha})^k \prod_{l=1}^k G(n, \mathcal A_{l}, \eta_l, \alpha).
\]
Integrating with respect to  $\Delta_{1, h}^{-\alpha} \sim \mbox{Gamma}(c, \theta)$ leads to the following expression for the marginal distribution
\[
    \frac{\Gamma(k + c)}{\Gamma(c)} \frac{\alpha^k \prod_{l=1}^k G(n, \mathcal A_{l}, \eta_l, \alpha)}{\left( 1 + \alpha  I(1, i)\right)^{c + k}}. 
\]

\section{Further details on the simulations}\label{sec:app_simu}

\subsection{Indian Buffet Process for the negative-Binomial trait process}\label{sec:nb_restaurant}

We describe here a generative process to simulate from the negative-Binomial process discussed in Section \ref{sec:nb}.
This is based on the predictive distribution in \eqref{pred_tsp} and the discussion in Section \ref{sec:nb}.
In particular, we recall that a new customer displays a number of new traits that is negative-binomial distributed, and assigns to the previously observed trait values from \eqref{eq:nb_new_trait}.
What is left to discuss is the distribution of the association level to the new traits that the new customer displays.

Consider first customer number 1. From Propostion 3.3 in \cite{Jam(17)}, conditionally to $\Delta_{1, h}$, the $A_{1, k}$'s such that $A_{1, k} > 0$ (note that there are $K_1$ of these, such that $K_1 \mid \Delta_{1, h}$ is Poisson distributed with parameter $\psi_1(\Delta_{1,h})$) are independent and identically distributed such that 
\begin{align*}
    A_{1, k} \mid \Delta_{1, h}, H_{1, k} = s &\quad \sim \quad \tilde G_{A^\prime}(s) \\
    H_{1, k} \mid \Delta_{1, h} &\quad \sim \quad f_{H | \Delta_{1, h}}(s) \propto \pi_A(s) \rho_0(s, \Delta_{1, h})
\end{align*}
where 
\[
    \rho_0(s, \zeta) = \frac{\zeta \rho(\zeta T^{-1}(s))}{T^\prime(T^{-1}(s))} I_{(T_0, T_1)}(s).
\]
Marginalizing with respect to $\Delta_{1, h}$ we have that $K_1$ is negative binomial distributed with parameters $c$ and $(1 + \alpha I(r, 0)) / (1 + \alpha I(r, 1))$. Moreover, the $H_{1, k}$ are independent with marginal density 
\[
    f_{H}(s) \propto (1 - e^{-rs}) (1 - e^{-s})^{-1-\alpha} e^{-s}
\]

After $n$ customer have entered, having chosen $k_n$ dishes with scores $\mathcal A_{l}$, $l=1, \ldots, k_n$ and associated index sets $\mathcal B_l$, customer $n+1$ assigns to each of the dishes previously served a score $A_{n+1, l} \mid J^*_l \sim G_{A}(\cdot \mid J^*_l)$. 
Marginalizing with respect to $J^*_l$, the marginal distribution of $A_{n+1, l}$ is \eqref{eq:nb_new_trait}, $l=1, \ldots, k_n$.
Then, she chooses $K_{n+1}$ new dishes such that $K_{n+1}$ is negative binomial distributed with parameters $c+k_n$ and $(1 + \alpha I(r, n)) / (\alpha \tilde I(r, n+1))$, and assigns to these a score $A^{\prime}_{n+1, h}$ $h = 1, \ldots, K_{n+1}$ such that 
\begin{align*}
    A^\prime_{n+1, k} \mid \Delta_{1, h}, H_{1, k} = s &\quad \sim \quad \tilde G_{A^\prime}(s) \\
    H_{n+1, k} \mid \Delta_{1, h} &\quad \sim \quad f_{H | \Delta_{1, h}}(s) \propto \pi_A(s) \rho_n(s, \Delta_{1, h})
\end{align*}
Marginalizing with respect to $\Delta_{1, h}$, we get that the $H_{n+1, k}$'s are independent with marginal density
\[
    f_{H_{n+1}}(s) \propto (1 - e^{-rs}) (1 - e^{-s})^{-1-\alpha} e^{-s(rn + 1)}.
\]
The density of the $H_{n+1, k}$'s does not belong to a known parametric family, but can be sampled using rejection sampling, using an exponential distribution with parameter 1 as envelope. Similarly, sampling $A^{\prime}_{n+1, k} \sim \tilde G_{A^\prime}(s)$ is performed via rejection sampling using the negative binomial with parameter $s$ as envelope.
Instead, to sample from \eqref{eq:nb_new_trait} we found it more convenient to evaluate this density on a grid $\{0, 1, \ldots, M\}$ and sample from a discrete distribution on the grid. This introduces a small error which is negligible as long as $M$ is large.

\subsection{The negative-Binomial IBP with Gamma process prior}\label{sec:app_marg_nb_gamma}

From \cite{Jam(17)}, we have that the marginal distribution of the sample is
\[
    e^{- \theta \sum_{i=1}^m \varphi_i} \theta^k \prod_{l=1}^k \int_{0}^{\infty} (1 - \pi_A(s))^{n - m_l} \pi_A(s)^{m_l} \prod_{i \in \mathcal{B}_l} \tilde G_{A^\prime}(\dd a_{i,k} \mid s) \rho(s) \dd s B_0(\dd w^*_k)
\]
In the case of $G_A$ as in \eqref{eq:nb_pmf}, we have $\pi_A(s) = (1 - e^{-sr})$ and
\[
    \varphi_i = \int_{\R_+} (1 - e^{-sr}) e^{-s r i} e^{-s} s^{-1} \dd s = \log\left(\frac{1 + r + ir}{1 + ir}\right)
\]
Leading to the following expression for the marginal
\[
    \theta^k \prod_{i=1}^m \left(\frac{1 + r + ir}{1 + ir}\right)^{-\theta} \prod_{l=1}^k \left[ \prod_{i \in \mathcal{B}_l} \binom{a_{i, l} + r - 1}{a_{i, l}}\right] \int_{\R_+} e^{-srn} e^{-s} (1 - e^{-s})^{q_l} s^{-1} \dd s B_0(\dd \omega^*_k)
\]

Moreover, the number of new traits $U_{n}^{m}$ displayed in an additional sample of size $m$ is Poisson distributed with parameter
\[
	\theta \sum_{j=1}^m \varphi_{n+j}  = \theta \log\left( \frac{1 + (n+m) r}{1 + nr} \right)
\]

\subsection{A close form expression for $I(r, k)$ when $r$ is integer}\label{sec:app_irk}

When $r$ is a positive integer, $I(r, k)$ can be expressed as 
\begin{align*}
    I(r, k) &= \int_{\R_+} (1-  e^{-rks}) (1 - e^{-s})^{-1-\alpha} e^{-s} \dd s \\
    &= \int_0^1 (1 - (1 - y)^{rk}) y^{-1-\alpha} \dd y \\
    &= \int_0^1 \sum_{i=0}^{rk - 1} (1 - y)^i y y^{-1-\alpha} \dd y \\
    &= \sum_{i=1}^{rk} \int_{0}^{1} (1 - y)^{i - 1} y^{-\alpha + 1 - 1} \dd y = \sum_{i=1}^{rk} B(1 - \alpha, i).
\end{align*}
The second equality follows from the change of variable $y = 1 - e^{-s}$, and the second by writing $(1 - (1 - y)^{rk})$ as the partial sum of a geometric series.

\FloatBarrier

\section*{Acknowledgement}
Mario Beraha and Stefano Favaro received funding from the European Research Council (ERC) under the European Union's Horizon 2020 research and innovation programme under grant agreement No 817257. Stefano Favaro gratefully acknowledge the financial support from the Italian Ministry of Education, University and Research (MIUR), ``Dipartimenti di Eccellenza" grant 2018-2022.

\end{document}